\newcommand\numberthis{\addtocounter{equation}{1}\tag{\theequation}}
\newcommand{\pushright}[1]{\ifmeasuring@#1\else\omit\hfill$\displaystyle#1$\fi\ignorespaces}
\DeclareMathOperator*{\argmax}{arg\,max}
\begin{document}
\sloppy
\allowdisplaybreaks[1]

\newtheorem{thm}{Theorem} 
\newtheorem{lem}{Lemma}
\newtheorem{prop}{Proposition}
\newtheorem{cor}{Corollary}
\newtheorem{defn}{Definition}
\newcommand{\remarkend}{\IEEEQEDopen}
\newtheorem{remark}{Remark}
\newtheorem*{rem}{Remark}
\newtheorem*{ex}{Example}
\newtheorem{pro}{Property}

\newenvironment*{example}[1][Example]{\begin{trivlist}
\item[\hskip \labelsep {\bfseries #1}]}{\end{trivlist}}

\renewcommand{\qedsymbol}{ \begin{tiny}$\blacksquare$ \end{tiny} }

\renewcommand{\algorithmicrequire}{\textbf{Input:}}
\renewcommand{\algorithmicensure}{\textbf{Inputs:}}

\renewcommand{\leq}{\leqslant}
\renewcommand{\geq}{\geqslant}

\title {Secure Distributed Storage: Optimal Trade-Off Between Storage Rate and Privacy Leakage 
}

\author{ R\'emi A. Chou, J\"{o}rg Kliewer, \emph{Fellow, IEEE} \thanks{R. Chou is with the Department of Computer Science and Engineering, The University of Texas at Arlington, Arlington, TX. J. Kliewer is with the Department of Electrical and Computer
Engineering, New Jersey Institute of Technology, Newark, NJ.  This work was supported in part by NSF grants CCF-2201824 and CCF-2201825. E-mails: remi.chou@uta.edu, jkliewer@njit.edu. A preliminary version of this work has been presented at the 2020 IEEE International Symposium on
Information Theory (ISIT)~\cite{chou2020secure}.}  }

\maketitle

\begin{abstract}

Consider the problem of storing data in a distributed manner over $T$ servers. Specifically, the data needs to (i) be recoverable from any $\tau$ servers, and (ii) remain private from any $z$ colluding servers, where privacy is quantified in terms of mutual information between the data and all the information available at any $z$ colluding servers. For this model, our main results are (i) the fundamental trade-off between storage size and the level of desired privacy, and (ii) the optimal amount of local randomness necessary at the encoder. As a byproduct, our results provide an optimal lower bound on the individual share size of ramp secret sharing schemes under a more general leakage symmetry condition than the ones previously considered in the literature. 
\end{abstract}

\section{Introduction}

Centralized data storage of sensitive information means compromising the entirety of the data in
the case of a data breach. By contrast, well-known distributed storage strategies, where data are
stored in multiple servers, can offer resilience against data breaches at a subset of servers and avoid having a single point of failure. Secure distributed storage schemes, e.g.,~\cite{garay2000secure,rawat2016centralized,bitar2018staircase,huang2016communication}, often rely on the idea of secret sharing as introduced in~\cite{shamir1979share,blakley1979safeguarding} -- we refer to \cite{beimel2011secret} for a comprehensive literature review on secret sharing. Hence, there is a fundamental lower bound on the required storage space necessary to securely store information in a distributed manner. Specifically, in any threshold secret sharing scheme, the total amount of information that needs to be stored must at least be equal to the entropy of the secret times the number of participants, see e.g., \cite{karnin1983secret}, and it is thus impossible to reduce the storage space without any  changes to the model assumptions.

  In this paper, we propose to determine the optimal cost reduction, in terms of storage space, that can be obtained in exchange of tolerating a \emph{controlled amount} of reduced privacy.  Specifically, we focus on a setting where a file $F$ needs to be stored at $T$ servers. The file must be recoverable from  $\tau$ servers, and needs to remain private from any $z$ colluding servers. Here, privacy is quantified in terms of mutual information between the data and all the information available at any $z$ colluding servers. In particular, we introduce a parameter $\alpha \in [0,1]$, to be chosen by the system designer, and require that no more than a fraction $\alpha$ of the file can be learned by a set of $z$ colluding servers. As a function of the parameters $(\tau,z,\alpha)$, we establish the optimal sum of the share sizes and the optimal amount of local randomness needed at the encoder. Under the assumption of leakage symmetry, i.e., when the information leakage about the file at a given set of colluding servers only depends on the cardinality of the set and not on the identities of the servers among this set, we establish the optimal individual share size for each server. Secret sharing schemes that satisfy such a leakage symmetry are also referred to as uniform secret sharing schemes, e.g.,~\cite{yoshida2018optimal,farras2014optimal}.

\subsection{Previous work}
Secret sharing was first introduced in \cite{blakley1979safeguarding,shamir1979share} and provides perfect security in that any set of colluding participants that is not allowed to reconstruct the secret cannot learn, in an information-theoretic sense, any information about the secret. With the objective to reduce the size of the participants' shares, ramp secret sharing has then been introduced in \cite{yamamoto1986secret,blakley1984security} to relax the security guarantees of secret sharing schemes. Specifically, in \cite{yamamoto1986secret,blakley1984security}, any set of colluding users with size smaller than some parameter $z$ cannot learn any information about the secret, any set of colluding users with size larger than or equal to some parameter $\tau$ can reconstruct the secret, and any set of colluding users with size striclty larger than $z$ but strictly smaller than $\tau$ can learn part of the secret. Additionally, for this last type of set of colluding users, the information leakage about the secret grows linearly with its size. Later, this idea to reduce the size of the shares by allowing information leakage was generalized to non-linear access function, e.g.,~\cite{yoneyama2004non,yoshida2007secure}, and further studied under the term \emph{non-perfect secret sharing}, e.g., \cite{yoshida2018optimal,farras2014optimal,kurosawa1994nonperfect}.

Performance metrics of interest for any secret sharing include the characterization of the optimal size of the shares. More specifically, characterizing the optimal sum of the share sizes, the optimal maximal share size, or the optimal individual share sizes for various secret sharing  settings has been the subject of intense research, e.g., \cite{karnin1983secret,capocelli1993size,blundo2001information,van1995information,csirmaz1995size,blundo1997tight,farras2014optimal,yoshida2018optimal,farras2020improving}, we also refer to \cite{beimel2011secret} for a survey of known results. For instance, optimal individual share sizes have been characterized for the original secret sharing setting \cite{blakley1979safeguarding,shamir1979share}, e.g., \cite{karnin1983secret}, and the optimal sum of the share sizes for ramp secret sharing has also been fully characterized, e.g., \cite{blundo1993efficient}. Unfortunately, even for the relatively simple case of ramp secret sharing the full characterization of optimal individual share sizes is unknown and a challenging problem. A partial solution is proposed in \cite{blundo1993efficient} by restricting the class of ramp secret sharing scheme to the class of linear ramp secret schemes, which adds symmetry to the problem. More recently, a more general and natural definition of symmetry is introduced in the form of uniform secret sharing~\cite{yoshida2012optimum,farras2014optimal,yoshida2018optimal}. In these works, the authors characterize the optimal individual share sizes obtained under such a symmetry condition, which requires that the information leakage only depends on the size of the set of colluding participants, and not on the specific identities of the participants in this set. 

\subsection{Comparison to previous work}
  
  Our setting not only considers generalizations of previous secret sharing settings but also introduces a fundamentally different view to study the trade-off between storage needs and privacy leakage. Specifically,  
   a major difference between the study of uniform secret sharing in  \cite{yoshida2018optimal,farras2014optimal} and our work is that, in \cite{yoshida2018optimal,farras2014optimal}, optimal individual share sizes are derived for a fixed access function, i.e., the information leakage about the file tolerated at a given set of colluding servers is a fixed and given value.  In contrast, in our setting we derive optimal individual share sizes for secret
sharing schemes whose access functions are not fixed but are allowed to belong to a set of access
functions. Indeed, in our setting, only two points of the access functions are fixed as parameters:
one point indicates a reconstruction threshold $\tau$, and the other  point indicates a maximum number of
colluding servers $z$. All the other points of the access function are optimized to minimize the
share sizes. This difference introduces a non-trivial optimization problem over a set of access functions
to determine optimal individual share sizes. We show that this optimization reduces to  maximizing the sum of consecutive gradients of an access function over the set of all possible access functions that satisfy our problem constraints, i.e., it must be less than or equal to $\alpha$ in point $z$ and equal to one in point $\tau$. The crux to solve this optimization is to introduce the concave envelopes of the access functions to show that piecewise linear access functions are solutions to the optimization.

We note that the idea of trading storage space against information leakage is  also closely related to non-perfect secret sharing \cite{farras2016recent,yoshida2018optimal,farras2014optimal}, including ramp secret sharing with linear \cite{yamamoto1986secret,blakley1984security} or non-linear access functions~\cite{yoneyama2004non,yoshida2007secure}. Similar to our previous comment, these settings have been studied for fixed access functions, whereas, in this study, to minimize share sizes, we consider secret sharing schemes with access functions allowed to belong to a set of access functions.

While the above remark on non-perfect secret sharing applies ramp secret sharing schemes, e.g., \cite{yamamoto1986secret,blakley1984security}, we highlight a new result for ramp secret sharing that follows from our main results. Specifically, when the privacy parameter is $\alpha =0$, i.e., perfect privacy is required against $z$ colluding servers, our results prove that
among all uniform ramp secret sharing schemes, which represents a more general class of secret sharing schemes than linear ramp secret sharing schemes, the ones that have a piecewise linear access function have
the minimum individual share sizes. This result had also not been previously proved in the literature.

\subsection{Paper organization}  

We formulate our problem statement and review known results in Section \ref{secs}. We present our main results in Section \ref{secres} and relegate the proofs to Sections~\ref{secproofach}, \ref{secproofc1}  to streamline presentation. Finally, we provide concluding remarks in Section~\ref{concl}.

\section{Problem statement and review of known results} \label{secs}
Notation: Let $\mathbb{N}$, $\mathbb{R}$, and $\mathbb{Q}$ be the sets of natural, real, and rational numbers, respectively. For $a,b \in \mathbb{R}$, define $\llbracket a,b \rrbracket \triangleq [\lfloor a \rfloor , \lceil b \rceil ] \cap \mathbb{N}$ and $[a]^+\triangleq\max(0,a)$. For two arbitrary sets $\mathcal{S}$ and $\mathcal{T}$, a sequence of elements $x_t\in\mathcal{S}$, $t\in\mathcal{T}$,  indexed by the set $\mathcal{T}$ is written as $(x_t)_{t\in\mathcal{T}}$. 

\subsection{Problem statement} \label{sec:ps}
Consider $T \geq 2$ servers indexed by $\mathcal{T} \triangleq \llbracket 1,T\rrbracket$. For $t\in\mathcal{T}$, define $[\mathcal{T}]^{\geq t}$ as the set of all the subsets of $\mathcal{T}$ that have a cardinality larger than or equal to $t$, i.e., $[\mathcal{T}]^{\geq t}\triangleq \{  \mathcal{S} \subseteq \mathcal{T} : |\mathcal{S}| \geq t\}$. Similarly, define $[\mathcal{T}]^{\leq t}\triangleq \{  \mathcal{S} \subseteq \mathcal{T} : |\mathcal{S}|\leq t\}$ and $[\mathcal{T}]^{= t}\triangleq \{  \mathcal{S} \subseteq \mathcal{T} : |\mathcal{S}| = t\}$. 
\begin{defn}
Let $(\lambda_t)_{t\in \mathcal{T}}\in \mathbb{N}^T$, $\rho \in \mathbb{N}$, and $\tau \in \mathcal{T}$. A $(\tau,(\lambda_t)_{t\in \mathcal{T}},\rho)$ coding scheme consists of 
\begin{itemize}
\item A file $F$, which is represented by a random binary sequence with finite length;
\item Local randomness in the form of a sequence $R$ of $\rho$ bits uniformly distributed over $\{ 0,1\}^{\rho}$ and independent of $F$;
		\item $T$ encoders $(e_t)_{t\in\mathcal{T}}$, where for $t\in \mathcal{T}$,		
		\begin{align*}
		e_t	 : \{ 0,1\}^{|F|} \times \{ 0,1\}^{\rho} \to \{ 0,1\}^{\lambda_t}, 
		 (F,R)  \mapsto  M_t,
	\end{align*}	 
	  which takes as input the file $F$ and the local randomness $R$, and outputs the sequence $M_t$, referred to as share in the following, of length $\lambda_t \in \mathbb{N}$. $\lambda_t$ is referred to as share size in the following.  
	  \item $T$ servers, where Server $t \in \mathcal{T}$ stores $M_t$. In the following, for any subset $\mathcal{S} \subseteq \mathcal{T}$ of servers, we use the notation $M_{\mathcal{S}} \triangleq (M_t)_{t \in \mathcal{S}}$.
	  \item For any subset $\mathcal{S} \subseteq \mathcal{T}$ such that $|\mathcal{S}|\geq \tau$, a decoder \begin{align*}
\smash{	d_{\mathcal{S}}	 : \bigtimes_{t\in \mathcal{S}} \{ 0,1\}^{\lambda_t}  \to \{ 0,1\}^{|F|},M_{\mathcal{S}} \mapsto  \hat{F}(\mathcal{S}),}
	\end{align*}	 
	  which takes as input $M_{\mathcal{S}}$ and  outputs $\hat{F}(\mathcal{S})$, an estimate of $F$.
\end{itemize}
\end{defn}

\begin{defn} \label{def}
For $\tau \in \mathcal{T}$, $\alpha \in \mathbb{Q}\cap[0,1]$, and $z \in \llbracket 1 , \tau -1 \rrbracket$, a $(\tau,(\lambda_t)_{t\in\mathcal{T}},\rho)$ coding scheme is $(\alpha,z)$-private if
\begin{align}
\displaystyle\max_{\mathcal{S} \in [\mathcal{T}]^{\geq\tau}}	H(F|\hat{F}(\mathcal{S})) & = 0, \text{ (Recoverability)} \label{eqreq1} \\
	 \displaystyle\max_{\mathcal{S} \in [\mathcal{T}]^{\leq z}}\frac{I(F;M_{\mathcal{S}})}{H(F)} &\leq \alpha, \text{ (Privacy)} \label{eqreq2} .
\end{align}
\end{defn}
Requirement \eqref{eqreq1} means that any subset of $\tau$ or more servers can reconstruct the file $F$. Note that Requirement \eqref{eqreq1} implies $\displaystyle\max_{\mathcal{S} \in [\mathcal{T}]^{\geq\tau}}	H(F|M_{\mathcal{S}}) = 0$. Requirement~\eqref{eqreq2} means that any subset of servers with size smaller than or equal to $z$ must not learn more than $\alpha H(F)$ bits of information about $F$. In the following, $\tau$ is referred to as reconstruction threshold, $\alpha$ is referred to as privacy leakage parameter, and $z$ is referred to as privacy threshold. The setting is illustrated in Figure~\ref{figsetting} when $(T,\tau,z) = (3,3,2)$.

\begin{figure}
    \centering
    \begin{subfigure}[t]{0.45 \textwidth}
        \centering
    \includegraphics[width=4.2cm]{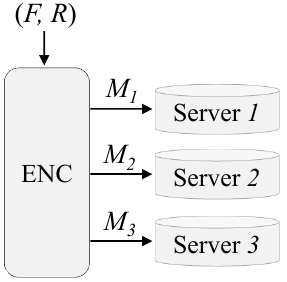}
        \caption{Storage}
        \vspace{1.9em}
    \end{subfigure}
    \begin{subfigure}[t]{0.45 \textwidth}
        \centering
    \includegraphics[width=6cm]{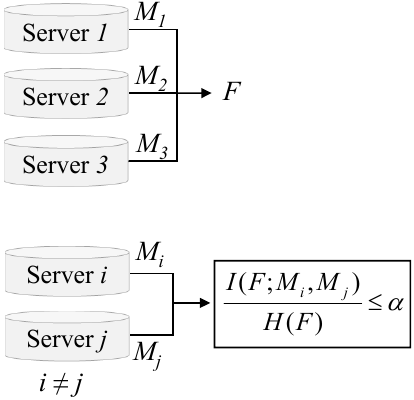}
        \caption{Retrieval}
    \end{subfigure}
\caption{Secure distributed storage (a) and retrieval (b) with privacy leakage for $T=3$ servers, reconstruction threshold $\tau =3$,  privacy threshold $z=2$, and privacy leakage parameter $\alpha$. $M_i$ is stored at Server $i\in \{1,2,3\}$ and created from the File $F$ and the local randomness $R$ available at the encoder.}\label{figsetting}
\end{figure}

\begin{rem}
In Definition \ref{def}, $\alpha$ is restricted to be a rational number. However, note that by density of $\mathbb{Q}$ in $\mathbb{R}$, for any $\beta \in [0,1]$, for any $\epsilon>0$, there exists $\alpha \in \mathbb{Q}\cap[0,1]$ such that $|\alpha - \beta| \leq \epsilon$.	
\end{rem}

\begin{defn} \label{def3}
Let $\tau \in \mathcal{T}$, $\alpha \in \mathbb{Q}\cap[0,1]$, and $z \in \llbracket 1 , \tau -1 \rrbracket$. Then, for $t\in\mathcal{T}$, define
\begin{align*}
&\lambda_t^{\star} (\alpha,z,\tau) \\
&\triangleq \min \{ \lambda_t \in \mathbb{N} : \text{there exists an $(\alpha,z)$-private}\\
&\phantom{-------}{\text{$(\tau,(\lambda_{t'})_{t'\in\mathcal{T}},\rho)$ coding scheme}}\\
&\pushright{\text{for some $\rho \in\!\mathbb{N}$ and $(\lambda_{t'})_{t'\in\mathcal{T}\backslash \{t\}}\in\!\mathbb{N}^{T-1}$} \} ,}\\
&\lambda_{\textup{sum}}^{\star} (\alpha,z,\tau)\\
&\triangleq \min \{ \textstyle\sum_{t\in\mathcal{T}}\lambda_t \in \mathbb{N} : \text{there exists an $(\alpha,z)$-private} \\
&\pushright{\text{$(\tau,(\lambda_t)_{t\in\mathcal{T}},\rho)$ coding scheme for some $\rho \in\!\mathbb{N}$} \} ,}\\
&\rho^{\star} (\alpha,z,\tau)  \\
&\triangleq \min \{ \rho \in \mathbb{N} : \text{there exists an $(\alpha,z)$-private} \\
&\pushright{\text{$(\tau,(\lambda_t)_{t\in\mathcal{T}},\rho)$ coding scheme for some $(\lambda_t)_{t\in\mathcal{T}}\in\!\mathbb{N}^T$} \}.}
\end{align*}
\end{defn}
For fixed $T$, $\alpha$, $\tau$, and $z$ as in Definition \ref{def3}, our objective in this paper is to characterize the optimal storage size $\lambda_t^{\star} (\alpha,z,\tau)$ at Server $t\in\mathcal{T}$, the optimal total storage size $\lambda_{\textup{sum}}^{\star} (\alpha,z,\tau)$, and the optimal amount of  local randomness needed at the encoder $\rho^{\star} (\alpha,z,\tau)$. 
 Note that it is a priori unclear whether there exists a coding scheme that can simultaneously achieve  $\lambda_t^{\star} (\alpha,z,\tau)$, $t\in\mathcal{T}$,  $\lambda_{\textup{sum}}^{\star} (\alpha,z,\tau)$, and $\rho^{\star} (\alpha,z,\tau)$. However, our results will prove that such a coding scheme exists.

\subsection{Previous results} \label{secrev}
The special case $\alpha = 0$ has been studied in the literature and corresponds to ramp secret sharing \cite{yamamoto1986secret,blakley1984security}. Specifically,  by choosing $\alpha =0$ and $z = \tau -L$, for some $L \in \llbracket 1, \tau -1 \rrbracket$, the problem statement of Section \ref{sec:ps} describes a so-called $(\tau , L, T)$ ramp secret sharing scheme. Additionally, for ramp secret sharing, we have, e.g., \cite{blundo1996randomness,blundo1993efficient},
\begin{align*}
\lambda_{\textup{sum}}^{\star} (\alpha=0,z= \tau -L,\tau) &=  \frac{T}{L} H(F),\\
\rho^{\star} (\alpha=0,z= \tau -L,\tau) &= \frac{\tau-L}{L} H(F).
\end{align*}
As remarked in \cite{blundo1993efficient}, in general, one does not have $\lambda_t^{\star} (\alpha=0,z= \tau -L,\tau) = \frac{1}{L}H(F), \forall t \in \mathcal{T}$, as for some $t\in \mathcal{T}$, the share size could be zero. For this reason,  \cite{blundo1993efficient} considers linear ramp secret sharing schemes, where the leakage on the  file $F$ for a set $\mathcal{S}$ of colluding servers scales linearly
with the size of $\mathcal{S}$ between $\tau -L$ to~$\tau$. In other words, a linear ramp secret sharing satisfies the condition 
\begin{align} \label{eqlins}
\forall \mathcal{S} \in [\mathcal{T}]^{\geq \tau-L+1} \cap [\mathcal{T}]^{\leq \tau-1}, H(F |M_{\mathcal{S}}) = \frac{\tau-|\mathcal{S}|}{L} H(F). \tag{\text{$A_1$}}
\end{align}
For such linear ramp secret sharing schemes,  \cite[Th. 3.3]{blundo1993efficient} establishes the following optimal individual share size:
\begin{align*}
\lambda_t^{\star} (\alpha=0,z= \tau -L,\tau) &= \frac{1}{L} H(F), \forall t \in \mathcal{T} .
\end{align*}

Remark that the definition of linear secret sharing schemes means that a fixed value is assigned to the information leakage at a given set of colluding servers, i.e., \eqref{eqlins} can be rewritten as $$\forall \mathcal{S} \in [\mathcal{T}]^{\geq \tau-L+1} \cap [\mathcal{T}]^{\leq \tau-1}, I(F ;M_{\mathcal{S}}) =  \frac{|\mathcal{S}|-(\tau-L)}{L} H(F). $$

\subsection{Discussion of leakage symmetry conditions used in previous work}
For any $\tau \in \mathcal{T}$, $\alpha \in \mathbb{Q}\cap[0,1]$, and $z \in \llbracket 1 , \tau -1 \rrbracket$, we will establish the optimal individual share size $\lambda_t^{\star} (\alpha,z,\tau)$ for any $t\in\mathcal{T}$ under  the following leakage symmetry condition \eqref{eqlt}
\begin{align} \label{eqlt}
\forall	t \in  \mathcal{T},\exists C_{t} \in \mathbb{R}^+ , \forall \mathcal{S} \in [\mathcal{T}]^{=t},\frac{I(F;M_{\mathcal{S}})}{H(F)} &= C_{t}, \tag{\text{$A_2$}}
\end{align}
where, by convention, we define $C_0 \triangleq 0$.   Condition \eqref{eqlt} means that when considering a subset of servers $\mathcal{S} \subseteq \mathcal{T}$, the privacy leakage about $F$, i.e., $I(F;M_{\mathcal{S}})$, must only depend on the cardinality of $\mathcal{S}$ and not the specific members in $\mathcal{S}$. Note that after normalization by $H(F)$, $\frac{I(F;M_{\mathcal{S}})}{H(F)} \in [0,1]$ for any $\mathcal{S} \subseteq \mathcal{T}$. Note also that, by \eqref{eqreq2},  we must have $C_{t} \leq \alpha$  for any $t \in \llbracket 1 , z \rrbracket$.

In the special case $\alpha =0$, observe that Condition~\eqref{eqlt} is more general than Condition~\eqref{eqlins}, which is reviewed in Section~\ref{secrev} and used to derive the optimal size of individual share for linear secret sharing schemes. Indeed, Condition \eqref{eqlins} is recovered by setting $C_t \triangleq \frac{t - (\tau -L)}{L} $ for $t\in\llbracket \tau-L+1, \tau-1 \rrbracket$ in Condition~\eqref{eqlt} with $L \triangleq \tau - z$. Hence, when $\alpha=0$, Condition~\eqref{eqlt} describes a class of ramp secret sharing schemes that contains linear ramp secret sharing schemes.

Note that the leakage symmetry condition \eqref{eqlt} is introduced under the term \emph{uniform secret sharing} in~\cite{yoshida2018optimal}, where the adjective uniform is used in \cite{yoshida2018optimal} to reflect that \eqref{eqlt} holds. In~\cite{yoshida2018optimal}, the optimal share size is established when  the constants $(C_t)_{t \in \mathcal{T}}$ in \eqref{eqlt}
are fixed. By contrast, in this paper, we are interested in finding the constants $(C_t)_{t \in \mathcal{T}}$ that minimize the individual share size and the necessary amount of local randomness at the encoder. To this end, we will carry an optimization over all possible secret sharing schemes that satisfy the leakage symmetry condition~\eqref{eqlt}.  
Another difference between our study and  \cite{yoshida2018optimal} is that our study extends~\cite{yoshida2018optimal} in the following aspects: 
for $\alpha \neq 0$, we study the optimal sum of the share sizes at all the servers and the optimal amount of local randomness required at the encoder in the absence of any leakage symmetry condition.

\section{Main results} \label{secres}

 We first establish in Theorem \ref{thconverse} the optimal individual share size and optimal amount of local randomness under the leakage symmetry condition \eqref{eqlt}. We then derive three corollaries from Theorem~\ref{thconverse} that recover or extend known results, as outlined below.

\begin{thm} \label{thconverse}
Let $\tau \in \mathcal{T}$, $\alpha \in \mathbb{Q}\cap[0,1]$, and $z \in \llbracket 1 , \tau -1 \rrbracket$. Suppose that the leakage symmetry condition~\eqref{eqlt} holds. Then, for any $t\in\mathcal{T}$, we have
\begin{align*}
\frac{\lambda_t^{\star} (\alpha,z,\tau)}{H(F)} &= \max \left( \frac{1-\alpha}{\tau-z}, \frac{1}{\tau} \right) = \begin{cases} \frac{1-\alpha}{\tau-z} & \text{if }\alpha < \frac{z}{\tau} \\ \frac{1}{\tau} & \text{if }\alpha \geq \frac{z}{\tau}\end{cases} ,\\
\frac{\rho^{\star} (\alpha,z,\tau)}{H(F)} &=     \frac{[z- \tau \alpha]^+ }{\tau-z} = \begin{cases} \frac{z- \tau \alpha }{\tau-z} & \text{if }\alpha < \frac{z}{\tau} \\ 0 & \text{if }\alpha \geq \frac{z}{\tau}\end{cases} .
\end{align*}	
Moreover, there exists an $(\alpha,z)$-private $(\tau,(\lambda_t^{\star} (\alpha,z,\tau))_{t\in\mathcal{T}},\rho^{\star} (\alpha,z,\tau))$ coding scheme, i.e., $\lambda_t^{\star} (\alpha,z,\tau)$, $t\in\mathcal{T}$, and $\rho^{\star} (\alpha,z,\tau)$ can simultaneously be achieved by a single coding scheme.
\end{thm}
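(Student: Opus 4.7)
The plan has three steps: an achievability construction attaining both $\lambda_t^\star$ and $\rho^\star$ simultaneously, and two matching converses. For achievability, I would construct one MDS-based coding scheme. Set the per-server share size $\ell \triangleq \max((1-\alpha)/(\tau-z),1/\tau)H(F)$ and let $k\triangleq\alpha(\tau-z)/(1-\alpha)$ when $\alpha<z/\tau$ (and $k\triangleq z$ otherwise); using the rationality of $\alpha$ and block-coding $F$ at a sufficiently large blocklength makes $\ell,k$ integers. Represent $F$ as $\tau-z+k$ symbols over $\mathbb{F}_{2^\ell}$, draw $z-k$ independent uniform symbols $R_1,\ldots,R_{z-k}$ (so $\rho=(z-k)\ell=[z-\tau\alpha]^+H(F)/(\tau-z)$), and encode the length-$\tau$ message $(F_1,\ldots,F_{\tau-z+k},R_1,\ldots,R_{z-k})$ through a $[T,\tau]$ Reed--Solomon code; Server $t$ stores the $t$-th codeword symbol. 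Recoverability is the MDS property; for privacy with $|S|=z$, a change of basis on the generator absorbs $R_1,\ldots,R_{z-k}$ into $z-k$ of the $z$ equations visible at $S$, leaving $k$ independent combinations of $F$ and hence $I(F;M_S)=k\ell=\alpha H(F)$ exactly, which also gives the leakage symmetry \eqref{eqlt}.

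For the converse on $\lambda_t^\star$, under \eqref{eqlt} and any $t\notin S$ one has $I(F;M_t\mid M_S)=(C_{|S|+1}-C_{|S|})H(F)$, so
\begin{align*}
\lambda_t\geq H(M_t)\geq H(M_t\mid M_S)\geq I(F;M_t\mid M_S)=(C_{|S|+1}-C_{|S|})H(F),
\end{align*}
yielding $\lambda_t/H(F)\geq\max_{0\leq i\leq T-1}(C_{i+1}-C_i)$. The access function $(C_i)$ satisfies $C_0=0$, $C_z\leq\alpha$, $C_\tau=1$, and monotonicity; telescoping over $[z,\tau]$ gives $\max_i(C_{i+1}-C_i)\geq(1-\alpha)/(\tau-z)$, and over $[0,\tau]$ gives $\max_i(C_{i+1}-C_i)\geq 1/\tau$. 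The concave-envelope argument sketched in the introduction shows these two bounds saturate the optimization: the piecewise-linear $(C_i)$ interpolating $(0,0)$, $(z,\alpha\wedge z/\tau)$, $(\tau,1)$ achieves $\max_i(C_{i+1}-C_i)=\max((1-\alpha)/(\tau-z),1/\tau)$, matching the MDS achievability.

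For the converse on $\rho^\star$, I start from $\rho=H(R)\geq H(M_S\mid F)$ for every $S$, which holds because $M_S$ is a deterministic function of $(F,R)$ with $F\perp R$. Averaging over $|S|=\tau$ and invoking recoverability $H(F\mid M_S)=0$ gives $\rho\geq\operatorname{avg}_{|S|=\tau}H(M_S)-H(F)$. For any $S_z\subset S_\tau$ with $|S_z|=z$, $|S_\tau|=\tau$, the chain rule combined with the privacy requirement yields
\begin{align*}
H(M_{S_\tau})-H(M_{S_z})\geq I(F;M_{S_\tau\setminus S_z}\mid M_{S_z})=H(F)-I(F;M_{S_z})\geq(1-\alpha)H(F).
\end{align*}
Averaging over all such pairs and using $\binom{T}{\tau}\binom{\tau}{z}=\binom{T}{z}\binom{T-z}{\tau-z}$ gives $\operatorname{avg}_{|S|=\tau}H(M_S)-\operatorname{avg}_{|S|=z}H(M_S)\geq(1-\alpha)H(F)$. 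Han's inequality applied to $(M_t)_{t\in\mathcal{T}}$ yields $\operatorname{avg}_{|S|=z}H(M_S)\geq(z/\tau)\operatorname{avg}_{|S|=\tau}H(M_S)$, and combining gives $\operatorname{avg}_{|S|=\tau}H(M_S)\geq\tau(1-\alpha)H(F)/(\tau-z)$, hence $\rho\geq(z-\tau\alpha)H(F)/(\tau-z)$; taking positive parts recovers the stated bound, which is trivially $\rho\geq 0$ when $\alpha\geq z/\tau$.

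The main obstacle I anticipate is the achievability step, and specifically forcing $I(F;M_S)=\alpha H(F)$ exactly for every $|S|=z$ so that \eqref{eqlt} holds with equality at the privacy boundary; I would handle this by choosing generic Reed--Solomon evaluation points ensuring every size-$z$ column submatrix of the generator interacts transversally with the $F/R$ split, combined with the block-coding step to clear integer-rounding of $\ell$ and $k$. By contrast, the $\rho^\star$ converse is the cleanest part because Han's inequality does the combinatorial heavy lifting; notably, this step requires neither the leakage symmetry \eqref{eqlt} nor a symmetrization over permutations of $\mathcal{T}$, consistent with the introduction's remark that the randomness lower bound extends outside the leakage-symmetric setting.
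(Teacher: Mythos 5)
Your proposal is correct and takes a genuinely different route from the paper in all three steps. For achievability you build a single Reed--Solomon ramp scheme with $\tau-z+k$ secret and $z-k$ random symbols, whereas the paper splits its target access function as $g=g_1+g_2$ and combines two separate ramp schemes following~\cite{yoshida2018optimal}; both attain the same share size and randomness, but your access function is flat on $\llbracket 0,z-k\rrbracket$ and then linear, while the paper's rises at slope $\alpha/z$ and then $(1-\alpha)/(\tau-z)$. For the share-size converse you observe $\lambda_t\geq I(F;M_t\mid M_{\mathcal{S}})=(C_{i+1}-C_i)H(F)$ for each $\mathcal{S}\not\ni t$ and conclude by pigeonhole over $\llbracket z,\tau-1\rrbracket$ and over $\llbracket 0,\tau-1\rrbracket$, which gives $\max_i(C_{i+1}-C_i)\geq\max\left(\frac{1-\alpha}{\tau-z},\frac{1}{\tau}\right)$ in two lines; the paper instead telescopes $H(M_t\mid M_{\mathcal{S}_z})-H(M_t\mid M_{\mathcal{S}_\tau})$ into $\sum_i[(C_{i+1}-C_i)-(C_{i+2}-C_{i+1})]^+$ and then needs the concave-envelope argument of Appendix~\ref{AppendixA} to evaluate the resulting minimum over access functions. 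Your intermediate bound is weaker for a fixed access function (the paper's sum dominates your max) but collapses to the same final constant, so the envelope machinery is bypassed entirely. For the randomness converse you average $H(M_{\mathcal{S}}\mid F)=H(M_{\mathcal{S}})-H(F)$ over reconstruction sets, use $H(M_{\mathcal{S}_\tau})-H(M_{\mathcal{S}_z})\geq(1-\alpha)H(F)$, and invoke Han's inequality to relate the $z$- and $\tau$-averages; the paper proves the Theorem~\ref{thconverse} randomness bound by reusing its share-size converse under~\eqref{eqlt}, and proves the no-symmetry bound of Theorem~\ref{th3} by a greedy recursion over nested sets $\mathcal{V}_i$ in Appendix~\ref{secconvth3}. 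Your Han-based route is more elementary, is, as you note, independent of~\eqref{eqlt}, and thus establishes the randomness converse for both theorems at once. One point to tighten: $k=\alpha(\tau-z)/(1-\alpha)$ is generally not an integer, so the block-coding you invoke must explicitly scale all symbol counts by the denominator of $\alpha/(1-\alpha)$ and let each server hold several RS evaluations, with the Vandermonde structure supplying the needed rank conditions; the paper's construction faces an analogous divisibility requirement when it sets $n_1=\tfrac{\alpha}{z}\tau n'$.
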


\begin{proof}
The achievability proof of Theorem \ref{thconverse} is detailed in Section \ref{secproofach}. The converse proof of Theorem~\ref{thconverse} is  presented in Section \ref{secproofc1}.  
\end{proof}

As expected, Theorem \ref{thconverse} shows that allowing information leakage, controlled by the parameter $\alpha$, enables a reduction of the individual share size and amount of local randomness needed at the encoder. Theorem \ref{thconverse} also shows the existence of a threshold with respect to $\alpha$. Specifically, when $ \alpha \geq z/ \tau$, then a $(\tau,\tau,T)$ ramp secret sharing is sufficient to achieve the optimal share size $1/ \tau$, since, in that case, the share size of any $z$ colluding users is $z / \tau$ and the privacy condition is immediately satisfied.

\begin{cor}\label{cor1}
 Assume that the privacy leakage is $\alpha =0$ and the privacy threshold is $z = \tau -1$. Observe from \eqref{eqreq1} and \eqref{eqreq2} that, in this case, Condition \eqref{eqlt} is always satisfied, in particular, $C_t =0$ when $t \in \llbracket 1 , \tau -1 \rrbracket$, and $C_t =1$ when $t \in \llbracket \tau, T\rrbracket$.   Then, by Theorem \ref{thconverse}, 
we recover the well-known fact, e.g., \cite[Th. 1]{karnin1983secret}, that the optimal share size is the entropy of $F$ for perfect threshold secret sharing, first introduced in~\cite{shamir1979share,blakley1979safeguarding}. 
\end{cor}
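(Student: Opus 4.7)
The plan is to observe that this is essentially a direct specialization of Theorem \ref{thconverse}, so the only real work is checking that the leakage symmetry condition \eqref{eqlt} is automatic under the chosen parameters, and then substituting into the formula for $\lambda_t^{\star}(\alpha,z,\tau)$.

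First, I would verify the claim that \eqref{eqlt} holds for every $(\alpha=0, z=\tau-1)$-private scheme, with the stated values of the constants $C_t$. For any $\mathcal{S}\in[\mathcal{T}]^{=t}$ with $t\in\llbracket 1,\tau-1\rrbracket = \llbracket 1,z\rrbracket$, the privacy constraint \eqref{eqreq2} with $\alpha=0$ yields $I(F;M_{\mathcal{S}})=0$, so $C_t=0$. For any $\mathcal{S}\in[\mathcal{T}]^{=t}$ with $t\in\llbracket\tau,T\rrbracket$, the recoverability constraint \eqref{eqreq1} gives $H(F|M_{\mathcal{S}})=0$, hence $I(F;M_{\mathcal{S}})=H(F)$ and $C_t=1$. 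Since $z=\tau-1$, these two ranges partition $\mathcal{T}$, so \eqref{eqlt} is satisfied with values depending only on $|\mathcal{S}|$, not on the identities inside $\mathcal{S}$.

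Second, I would plug $\alpha=0$ and $z=\tau-1$ into Theorem \ref{thconverse}. Since $\alpha=0<(\tau-1)/\tau=z/\tau$ (using $\tau\geq 2$, which is guaranteed because $z\geq 1$), we land in the first branch of the piecewise expression and obtain
\begin{equation*}
\frac{\lambda_t^{\star}(0,\tau-1,\tau)}{H(F)} \;=\; \frac{1-0}{\tau-(\tau-1)} \;=\; 1,
\end{equation*}
for every $t\in\mathcal{T}$, which is exactly the Karnin--Greene--Hellman bound \cite{karnin1983secret} for perfect $(\tau,T)$-threshold secret sharing. As a sanity check, the formula for $\rho^{\star}$ simultaneously collapses to $(\tau-1-0)/1 = \tau-1$ times $H(F)$, matching the known optimal randomness cost for Shamir-type schemes. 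There is no real obstacle here: the corollary is essentially a substitution, and the only thing one has to notice is that the automatic validity of \eqref{eqlt} (rather than an extra structural hypothesis) is what legitimizes invoking Theorem \ref{thconverse}.
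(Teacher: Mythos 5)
Your proof is correct and follows the same reasoning the paper compresses into the corollary statement itself: the two constraints \eqref{eqreq1} and \eqref{eqreq2} force $I(F;M_{\mathcal{S}})=0$ for $|\mathcal{S}|\leq z=\tau-1$ and $I(F;M_{\mathcal{S}})=H(F)$ for $|\mathcal{S}|\geq\tau$, and since these ranges exhaust $\mathcal{T}$, Condition \eqref{eqlt} holds automatically, after which substituting $\alpha=0$, $z=\tau-1$ into Theorem~\ref{thconverse} (noting $\tau\geq 2$ so $\alpha<z/\tau$) gives $\lambda_t^{\star}/H(F)=1$. No gap, no divergence from the paper's route.
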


\begin{cor}
Suppose that the leakage symmetry condition~\eqref{eqlt} holds. Assume that the privacy leakage is $\alpha =0$ and the privacy threshold is $z = \tau -L$, for some $L \in \llbracket 1, \tau -1 \rrbracket$.   Then, Theorem~\ref{thconverse} 
recovers  the result in\cite{blundo1993efficient,blundo1996randomness}, for $(\tau , L, T)$ \emph{linear} ramp secret sharing schemes, i.e., secret sharing schemes that satisfy Condition \eqref{eqlins}, and generalizes it to the larger class of uniform secret sharing schemes, i.e., secret sharing schemes that satisfy Condition~\eqref{eqlt}. The result can also be interpreted as follows: Among all uniform secret sharing schemes, linear secret sharing schemes  are optimal in terms of individual share size and local randomness necessary at the encoder.
\end{cor}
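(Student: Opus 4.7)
The proof plan is pure specialization of Theorem~\ref{thconverse}. With $\alpha = 0$ and $z = \tau - L$ for some $L \in \llbracket 1, \tau-1 \rrbracket$, I would first check the branch condition $\alpha < z/\tau$: it reduces to $0 < (\tau - L)/\tau$, which holds because $L < \tau$. Substituting into the two expressions of Theorem~\ref{thconverse} then immediately yields
\begin{align*}
\frac{\lambda_t^{\star}(0,\tau - L,\tau)}{H(F)} &= \frac{1}{\tau - (\tau - L)} = \frac{1}{L}, \\
\frac{\rho^{\star}(0,\tau - L,\tau)}{H(F)} &= \frac{\tau - L}{\tau - (\tau - L)} = \frac{\tau - L}{L},
\end{align*}
which are precisely the values established in~\cite{blundo1993efficient,blundo1996randomness} for $(\tau, L, T)$ linear ramp secret sharing. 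This already disposes of the ``recovers'' portion of the corollary.

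To obtain the generalization to uniform secret sharing, I would invoke the observation already recorded in Section~\ref{secrev}: Condition~\eqref{eqlins} is a special case of Condition~\eqref{eqlt}, obtained by pinning $C_t = (t - (\tau - L))/L$ for $t \in \llbracket \tau - L + 1, \tau - 1 \rrbracket$ (together with $C_t = 0$ for $t \leq \tau - L$ and $C_t = 1$ for $t \geq \tau$). Hence every linear ramp scheme is uniform, whereas a uniform scheme is free to choose any admissible leakage constants $(C_t)_{t\in\mathcal{T}}$. Since Theorem~\ref{thconverse} establishes the converse lower bounds $H(F)/L$ and $(\tau - L)H(F)/L$ over the entire uniform class, the Blundo et al.\ bounds, previously only known under~\eqref{eqlins}, are now valid under the weaker hypothesis~\eqref{eqlt}.

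For the closing interpretation that linear ramp schemes are optimal within the uniform class, I would point to the achievability part of Theorem~\ref{thconverse} detailed in Section~\ref{secproofach}: specialized to $\alpha = 0$, $z = \tau - L$, its construction produces a single coding scheme that simultaneously attains $\lambda_t^{\star}$ at every $t\in\mathcal{T}$ and $\rho^{\star}$, and whose induced leakage profile grows linearly in $|\mathcal{S}|$ between $\tau - L$ and $\tau$, i.e., it satisfies Condition~\eqref{eqlins}. Therefore the optimum over the uniform class is attained by a member of the linear subclass, which is exactly the asserted optimality statement. The only actual verification required is the branch-condition check and the direct arithmetic of the substitution; with Theorem~\ref{thconverse} assumed, no genuine obstacle remains, and the proof is essentially a one-line specialization plus a pointer to the achievability scheme for the last sentence.
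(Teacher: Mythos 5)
Your specialization is correct and is exactly how the paper treats this corollary: plug $\alpha=0$, $z=\tau-L$ into Theorem~\ref{thconverse} (noting $\alpha < z/\tau$ holds), recover $\lambda_t^\star/H(F)=1/L$ and $\rho^\star/H(F)=(\tau-L)/L$, and use the fact (already noted in Section~\ref{secrev} of the paper) that Condition~\eqref{eqlins} is a special case of Condition~\eqref{eqlt}, with the achievability construction (which for $\alpha=0$ degenerates to a standard $(\tau,L,T)$ linear ramp scheme satisfying \eqref{eqlins}) giving the final optimality-of-linear-schemes interpretation. No gaps; this matches the paper's intended argument.
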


\begin{cor}
Assume that the reconstruction threshold is $\tau = T$,  the privacy threshold is $z =T-1$, and the Condition \eqref{eqlt} holds. Then, Theorem~\ref{thconverse} 
recovers the results found in \cite{chou2020secure} and generalizes them to the case where the shares are not assumed to be of equal size in the problem statement.
\end{cor}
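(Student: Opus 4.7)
The plan is simply to evaluate the formulas of Theorem~\ref{thconverse} at $\tau = T$ and $z = T-1$ and then compare the specialized expressions with the statement of the main result in \cite{chou2020secure}. First I would substitute $\tau - z = 1$ into the individual share size formula, obtaining
\begin{align*}
\frac{\lambda_t^{\star}(\alpha, T-1, T)}{H(F)} = \max\!\left(1-\alpha,\, \tfrac{1}{T}\right),
\end{align*}
for every $t \in \mathcal{T}$, which reduces to $1-\alpha$ when $\alpha < (T-1)/T$ and to $1/T$ when $\alpha \geq (T-1)/T$. A parallel substitution in the randomness expression yields
\begin{align*}
\frac{\rho^{\star}(\alpha, T-1, T)}{H(F)} = [T-1 - T\alpha]^+,
\end{align*}
which matches the corresponding quantities reported in \cite{chou2020secure}.

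The second step is to argue that this recovers and strictly extends the setting of \cite{chou2020secure}. In that preliminary paper, the problem formulation imposes a priori that all shares have identical size, so the individual share size coincides trivially with $\lambda_{\textup{sum}}^{\star}/T$. In contrast, Theorem~\ref{thconverse} starts from a formulation where the share sizes $(\lambda_t)_{t \in \mathcal{T}}$ are free parameters and derives, under the leakage symmetry assumption~\eqref{eqlt}, that the optimum value $\lambda_t^{\star}(\alpha, T-1, T)$ is the same for every $t$. Thus the equal-share-size structure is a \emph{consequence}, not an assumption. I would make this point explicitly, and also note that the achievability clause of Theorem~\ref{thconverse} guarantees a single coding scheme meeting both the individual share size and the local randomness bounds simultaneously, which is exactly the form of the result in \cite{chou2020secure}.

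There is no real obstacle here: the corollary is a direct specialization once Theorem~\ref{thconverse} is granted, and the only subtlety worth highlighting in the write-up is that for $\tau - z = 1$ the leakage symmetry condition \eqref{eqlt} puts no restriction on intermediate subsets (since there is no cardinality strictly between $z$ and $\tau$), so the only non-trivial constants $C_t$ in \eqref{eqlt} are those for $t \in \llbracket 1, T-1 \rrbracket$ with $C_t \leq \alpha$, together with $C_T = 1$ forced by \eqref{eqreq1}. This observation makes the passage from the general Theorem~\ref{thconverse} to the result of \cite{chou2020secure} transparent and confirms that the corollary follows with no additional argument beyond substitution.
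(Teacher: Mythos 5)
Your proposal is correct and matches the paper's (implicit) argument: the corollary is stated without a separate proof precisely because it is the direct specialization $\tau = T$, $z = T-1$ of Theorem~\ref{thconverse}, giving $\lambda_t^{\star}/H(F) = \max(1-\alpha, 1/T)$ and $\rho^{\star}/H(F) = [T-1-T\alpha]^+$, together with the observation that equal share sizes are now a conclusion under Condition~\eqref{eqlt} rather than an assumption of the problem statement as in \cite{chou2020secure}. No additional argument is needed beyond this substitution and comparison.
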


We numerically illustrate Theorem \ref{thconverse} in the following example. 
\begin{ex}
For the case of $T=12$ servers and a reconstruction threshold $\tau = 7$, we depict in Figures~\ref{figlambda} and \ref{figrho}, $\frac{\lambda_t^{\star} (\alpha,z,\tau)}{H(F)}$, $t\in\mathcal{T}$, and $\frac{\rho^{\star} (\alpha,z,\tau)}{H(F)}$ obtained in Theorem \ref{thconverse}, respectively, as functions of the privacy leakage parameter $\alpha$ and the privacy threshold $z$.
\end{ex}

\begin{figure}[!tbp]
  \centering
  \begin{minipage}[b]{0.49\textwidth}
    \includegraphics[width=8cm]{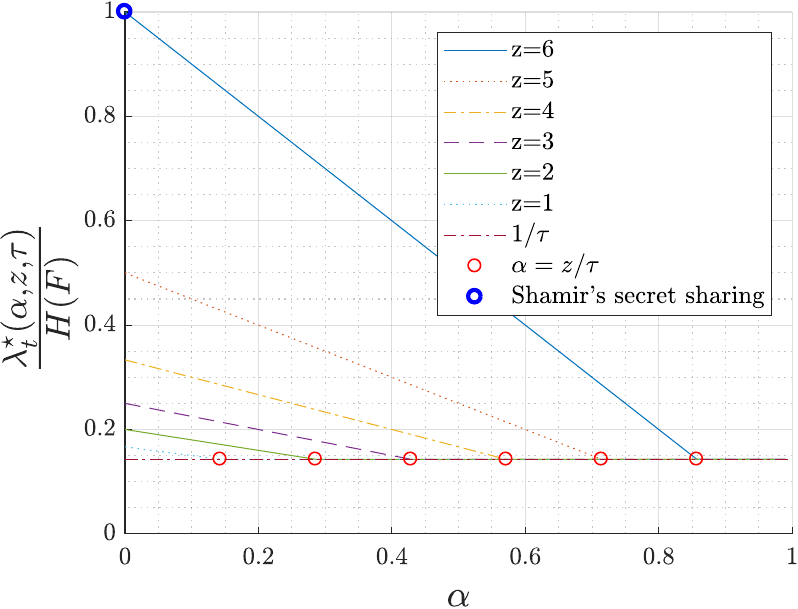}
\caption{$\frac{\lambda_t^{\star} (\alpha,z,\tau)}{H(F)}$, $t\in\mathcal{T}$, when $T=12$, $\tau = 7$, and the privacy threshold belongs to $\llbracket 1, 6 \rrbracket$. The bold blue circle corresponds to the optimal share size for Shamir's secret sharing, as reviewed in Corollary \ref{cor1}.}\label{figlambda}
  \end{minipage}
  \hfill
  \begin{minipage}[b]{0.49\textwidth}
    \includegraphics[width=8cm]{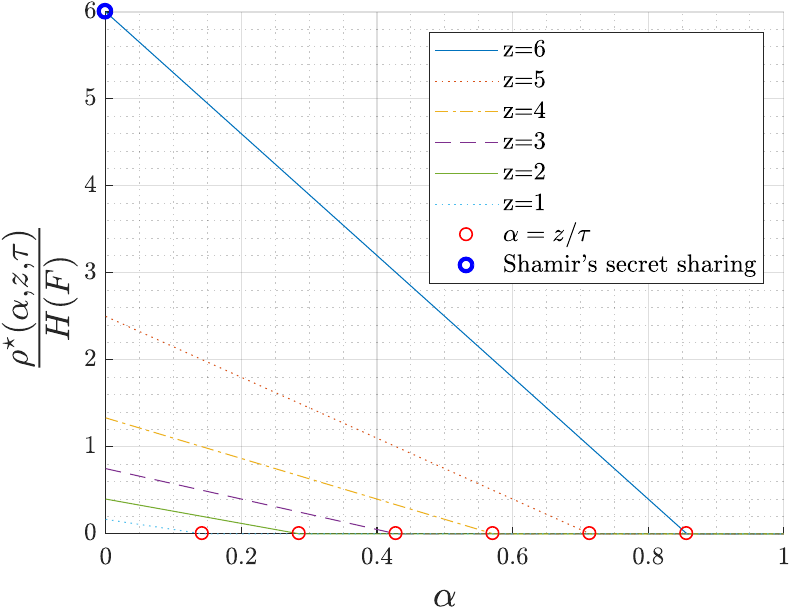}
\caption{$\frac{\rho^{\star} (\alpha,z,\tau)}{H(F)}$ when $T=12$, $\tau = 7$, and $z$ belongs to $\llbracket 1, 6 \rrbracket$. The bold blue circle corresponds to the optimal amount of necessary randomness at the encoder for Shamir's secret sharing, as reviewed in Corollary \ref{cor1}.
}\label{figrho}
  \end{minipage}
\end{figure}

Note that, in the absence of any leakage symmetry condition, the minimum size of an individual share could be zero. For instance, 
using the notation of Section \ref{secrev}, one can construct  a  $(\tau,\tau - z,T)$ ramp secret sharing scheme with $T$ participants, where the share size of $\tau-z-1$ participants is zero  as follows: Consider a $(z+1,1,T-(\tau-z-1))$ ramp secret sharing scheme  with $T-(\tau-z-1)$ participants and consider $\tau-z-1$ additional participants to whom we do not give any share. This shows that Theorem \ref{thconverse} does not hold in the absence of Condition \eqref{eqlt}.

Then, beyond the optimal individual share sizes, we also study the optimal sum of the share sizes in the absence of any leakage symmetry condition. Specifically, in this case, we  establish the optimal sum of the share sizes $ \lambda_{\textup{sum}}^{\star} (\alpha,z,\tau)$ and the optimal amount of local randomness $\rho^{\star} (\alpha,z,\tau)$, both defined in Definition \ref{def3}. 

\begin{thm}  \label{th3}
Let $\tau \in \mathcal{T}$, $\alpha \in \mathbb{Q}\cap[0,1]$, and $z \in \llbracket 1 , \tau -1 \rrbracket$. We have
\begin{align*}
\frac{\lambda_{\textup{sum}}^{\star} (\alpha,z,\tau)}{H(F)} &= T \max \left( \frac{1-\alpha}{\tau-z}, \frac{1}{\tau} \right) = \begin{cases} T\frac{1-\alpha}{\tau-z} & \text{if }\alpha < \frac{z}{\tau} \\ \frac{T}{\tau} & \text{if }\alpha \geq \frac{z}{\tau}\end{cases} ,\\
\frac{\rho^{\star} (\alpha,z,\tau)}{H(F)} &=     \frac{[z- \tau \alpha]^+ }{\tau-z} = \begin{cases} \frac{z- \tau \alpha }{\tau-z} & \text{if }\alpha < \frac{z}{\tau} \\ 0 & \text{if }\alpha \geq \frac{z}{\tau}\end{cases} .
\end{align*}	
Moreover, there exists an $(\alpha,z)$-private $(\tau,(\lambda_t )_{t\in\mathcal{T}},\rho^{\star} (\alpha,z,\tau))$ coding scheme such that $ \sum_{t\in \mathcal{T}}\lambda_t=  \lambda_{\textup{sum}}^{\star} (\alpha,z,\tau)$, i.e., $\lambda_{\textup{sum}}^{\star}(\alpha,z,\tau)$ and $\rho^{\star} (\alpha,z,\tau)$ can simultaneously be achieved by a single coding scheme.
\end{thm}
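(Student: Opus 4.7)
My plan is to reduce both directions of Theorem~\ref{th3} to Theorem~\ref{thconverse}. Achievability will be immediate: the coding scheme constructed in the achievability portion of Theorem~\ref{thconverse} satisfies the leakage symmetry condition~\eqref{eqlt}, and simultaneously attains $\lambda_t^\star(\alpha,z,\tau)$ at each Server $t\in\mathcal{T}$ together with $\rho^\star(\alpha,z,\tau)$ bits of randomness; summing its $T$ equal per-server share sizes then yields the claimed $T\max((1-\alpha)/(\tau-z),1/\tau) H(F)$ simultaneously with the claimed randomness.

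For the converse, the main step will be a symmetrization construction. Starting from an arbitrary $(\alpha,z)$-private $(\tau,(\lambda_t)_{t\in\mathcal{T}},\rho)$ coding scheme with encoders $(e_t)_{t\in\mathcal{T}}$, I will build a new scheme for the file $\mathbf{F}\triangleq (F_\pi)_{\pi\in S_T}$ consisting of $T!$ i.i.d.\ copies of $F$ indexed by the symmetric group $S_T$ on $\mathcal{T}$. In the $\pi$-th copy I run the original scheme on $F_\pi$ with fresh randomness $R_\pi$, but permute the server labels by $\pi$, so that Server $t$ stores the share $M^{(\pi)}_{\pi(t)}$ produced by encoder $e_{\pi(t)}$. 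In the new scheme, Server $t$ concatenates the sequence $(M^{(\pi)}_{\pi(t)})_{\pi\in S_T}$, so every server has identical size $(T-1)!\sum_{t'}\lambda_{t'}$, the total randomness is $T!\rho$, and the file entropy equals $H(\mathbf{F})=T!\,H(F)$.

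I will then verify that the symmetrized scheme is itself $(\alpha,z)$-private for $\mathbf{F}$ and satisfies~\eqref{eqlt}. For any $\mathcal{S}\subseteq\mathcal{T}$, a telescoping chain rule combined with the mutual independence of $(F_\pi,R_\pi)_{\pi\in S_T}$ should give
\[
I(\mathbf{F};\mathbf{M}_\mathcal{S}) = \sum_{\pi\in S_T} I(F_\pi; M^{(\pi)}_{\pi(\mathcal{S})}) = \sum_{\pi\in S_T} I(F; M_{\pi(\mathcal{S})}),
\]
which depends only on $|\mathcal{S}|$ because, as $\pi$ ranges over $S_T$, $\pi(\mathcal{S})$ realizes every subset of $\mathcal{T}$ of size $|\mathcal{S}|$ exactly $|\mathcal{S}|!(T-|\mathcal{S}|)!$ times; moreover for $|\mathcal{S}|\leq z$ each summand is at most $\alpha H(F)$, yielding $I(\mathbf{F};\mathbf{M}_\mathcal{S})\leq T!\alpha H(F)=\alpha H(\mathbf{F})$. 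Recoverability from any $\tau$ servers is inherited copy by copy. Applying Theorem~\ref{thconverse} to the symmetrized scheme will then force
\[
(T-1)!\sum_{t'}\lambda_{t'}\geq T!\max\!\left(\tfrac{1-\alpha}{\tau-z},\tfrac{1}{\tau}\right) H(F), \quad T!\rho\geq T!\tfrac{[z-\tau\alpha]^+}{\tau-z} H(F),
\]
and dividing both inequalities by $T!$ will give the claimed converse bounds on $\lambda_{\mathrm{sum}}^\star(\alpha,z,\tau)$ and $\rho^\star(\alpha,z,\tau)$ simultaneously.

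The hard part will be establishing the privacy bound for the symmetrized scheme. In particular, the telescoping expansion must collapse each conditional term $I(F_\pi; M^{(\pi)}_{\pi(\mathcal{S})}\,\vert\,\text{earlier copies})$ to the unconditional $I(F_\pi; M^{(\pi)}_{\pi(\mathcal{S})})$, which requires carefully combining the fact that $M^{(\pi)}$ depends only on $(F_\pi,R_\pi)$ with the mutual independence across copies to show that $M^{(\pi)}$ is independent of both $\mathbf{F}\setminus F_\pi$ and all earlier $M^{(\pi')}$'s. Once this is in place, the remaining steps are combinatorial bookkeeping.
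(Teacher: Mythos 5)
Your proposal is correct, and it takes a genuinely different route from the paper's. The paper proves the converse of Theorem~\ref{th3} directly (Appendix~\ref{secconvth3}) via a combinatorial averaging argument: it first shows $\sum_{l\in\mathcal{S}}H(M_l)\geq(1-\alpha)H(F)$ for any disjoint $\mathcal{W},\mathcal{S}$ with $|\mathcal{W}|=z$, $|\mathcal{S}|=\tau-z$ (and similarly $\sum_{l\in\mathcal{S}}H(M_l)\geq H(F)$ for $|\mathcal{S}|=\tau$), then sums these inequalities over all such subsets and counts multiplicities; the randomness bound is obtained by a greedy nested-subset construction $\mathcal{V}_0\subset\mathcal{V}_1\subset\dots\subset\mathcal{V}_z$ in which each $\mathcal{V}_{i+1}$ adds an entropy-maximizing server. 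Your symmetrization instead blows the file up into $T!$ independent copies with permuted server labels, verifies that the blown-up scheme is $(\alpha,z)$-private, recoverable, and satisfies~\eqref{eqlt} (using $I(\mathbf{F};\mathbf{M}_{\mathcal{S}})=\sum_{\pi}I(F;M_{\pi(\mathcal{S})})$, which follows from the product structure of $(F_\pi,R_\pi,M^{(\pi)})_{\pi}$), and then invokes the converse of Theorem~\ref{thconverse}, which the paper indeed establishes for \emph{every} scheme satisfying~\eqref{eqlt}, not merely the optimal one. What your route buys is unification: Theorem~\ref{th3} becomes a corollary of Theorem~\ref{thconverse} rather than a separate argument, and the same device simultaneously yields both the $\lambda_{\textup{sum}}$ and $\rho$ bounds. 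What the paper's route buys is self-containedness and avoiding the $T!$-fold enlargement of the file and randomness alphabet. The step you flag as ``the hard part'' --- collapsing the conditional mutual information terms --- is in fact immediate from mutual independence of the tuples $(F_\pi,R_\pi)_\pi$ together with $M^{(\pi)}$ being a function of $(F_\pi,R_\pi)$ alone, so the proposal closes without any gap.
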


\begin{proof}
The achievability part follows from the achievability part of Theorem \ref{thconverse} by summing the sizes of the $T$ shares and the converse part is proved in Appendix~\ref{secconvth3}. 
\end{proof}

Note that the converse proof of Theorem \ref{th3} is of combinatorial nature and different from the proof of Theorem~\ref{thconverse}, which involves an optimization problem. Note also that Theorems~\ref{thconverse} and \ref{th3} indicate that there is no gain, in terms of necessary local randomness at the encoder,  between an optimization over secret sharing schemes that satisfy Condition \eqref{eqlt} and any secret sharing schemes that do not.

\section{Achievability proof of Theorem \ref{thconverse}} \label{secproofach}

Let $\alpha \in \mathbb{Q}\cap [0,1]$. We consider the two cases $\alpha \geq z/\tau$ and $\alpha < z/\tau$ in Sections \ref{sec2ach} and \ref{sec1ach}, respectively. We first review some definitions in Section \ref{sec0ach}.  

\subsection{Preliminary definitions} \label{sec0ach}
\subsubsection{Access function} For a coding scheme as in Definition~\ref{def} that satisfies the leakage symmetry Condition \eqref{eqlt}, one can define an access function, e.g.,~\cite{farras2014optimal}, which fully describes the leakage of any set of shares. More specifically, with the notation of Definition \ref{def}, define the access function of a coding scheme as
 \begin{align*}
 g : \llbracket 0 ,T \rrbracket  \to [0,1],  t  \mapsto C_t.
 \end{align*}
Note that, by \eqref{eqlt}, the leakage of any set $\mathcal{S}$ of shares only depends on the cardinality of $\mathcal{S}$. Hence, it is sufficient to consider $g$ to fully describe the leakage of any set of shares. Note also that, by \eqref{eqreq1}, the reconstruction threshold $\tau$ implies that $ g(t) = 1$ for any $t \in \llbracket \tau , T \rrbracket$, and,  by \eqref{eqreq2}, the privacy threshold $z$ implies that $ g(z) \leq \alpha$. Finally, note that by definition of $C_t$, $t\in\mathcal{T}$, in \eqref{eqlt}, $g$ is non-decreasing.

\subsubsection{Ramp secret sharing}
Consider $\tau,T,L \in \mathbb{N}$ such that $ 1\leq L < \tau \leq T$. A $(\tau,L,T)$ linear ramp secret sharing scheme, e.g., \cite{yamamoto1986secret,blakley1984security}, is a coding scheme as in Definition~\ref{def} with $\alpha =0$, $z = \tau - L$, and with access~function 
\begin{align*}
 g : \llbracket 0 ,T \rrbracket  \to [0,1],  
                  t  \mapsto \begin{cases}
0  & \text{if } t \in  \llbracket 0, \tau - L \rrbracket   \\
 \frac{t - \tau +L}{L}    & \text{if } t \in  \llbracket \tau - L +1, \tau  \rrbracket   \\
  1 & \text{if } t \in  \llbracket \tau +1 , T \rrbracket
 \end{cases}.
 \end{align*}
 In particular, any $\tau$ shares can reconstruct $F$, any set of shares less than or equal to $\tau -L$ does not leak any information about $F$, and for sets of shares with cardinality in $ \llbracket \tau - L +1, \tau  \rrbracket$, the leakage increases linearly with the set cardinality.

\subsection{Achievability proof} 
The first step of the achievability scheme is to characterize the access function $g$ of our desired secret sharing scheme. Note that beyond being an access function as defined in Section~\ref{sec0ach}, the only constraint that our setting imposes on $g$ is $g(z) \leq \alpha$. From our converse results, we know that a piecewise linear access function $g$ would provide the lowest possible individual share sizes. We then consider two cases: $\alpha \geq z/\tau$ and $\alpha < z/\tau$. While we handle the first case with a simple ramp secret sharing scheme, we 
 follow the idea from~\cite{yoshida2018optimal} to handle the second case. Specifically, we remark that $g$ can be written as the sum of two other access functions $g_1$ and $g_2$, i.e., $g = g_1 + g_2$, that correspond to two ramp secret sharing schemes. Finally, we construct a secret sharing scheme with access function $g$ by a combination of two ramp secret sharing schemes with access functions $g_1$ and $g_2$.
\subsubsection{Case 1} \label{sec2ach}
Assume that $\alpha \geq z/\tau$.
Let $g$ be the access function of a $(\tau, \tau,T)$ ramp secret sharing scheme, which can be done as in \cite{yamamoto1986secret,blakley1984security} with, for any $t\in \mathcal{T}$,
$\frac{\lambda_t (\alpha,z,\tau)}{H(F)} = \frac{1}{\tau},$ and 
$\frac{\rho (\alpha,z,\tau)}{H(F)} = 0.$

Note that 
this scheme satisfies~\eqref{eqreq1} because $g(t)=1$ for $t \in  \llbracket \tau  , T \rrbracket$, and also satisfies~\eqref{eqreq2} because for any $t \in  \llbracket 0, z \rrbracket$, $g(t) \leq z/\tau\leq  \alpha$. 
\subsubsection{Case 2}\label{sec1ach} Assume that $\alpha < z/\tau$.
Consider the following access function 
\begin{align*}
g: 
 t \mapsto \begin{cases}
\frac{\alpha}{z}t  & \text{if } t \in  \llbracket 0, z \rrbracket   \\
 \frac{1-\alpha}{\tau-z}(t -z) + \alpha   & \text{if } t \in  \llbracket z+1, \tau  \rrbracket   \\
  1 & \text{if } t \in  \llbracket \tau +1 , T \rrbracket
 \end{cases}.
\end{align*}
Note that if one can construct a coding scheme with access function $g$, then this coding scheme satisfies~\eqref{eqreq1} because $g(t)=1$ for $t \in  \llbracket \tau  , T \rrbracket$, and also satisfies \eqref{eqreq2} because for any $t \in  \llbracket 0, z \rrbracket$, $g(t) \leq \alpha$. We construct such a coding scheme using the method in \cite{yoshida2018optimal}. First, note that $g = g_1 + g_2$, where we have~defined
\begin{align*}
g_1:   
 t &\mapsto \begin{cases}
\frac{\alpha}{z}t  & \text{if } t \in  \llbracket 0, \tau \rrbracket   \\
 \frac{\alpha}{z} \tau    & \text{if } t \in  \llbracket \tau+1, T  \rrbracket   
 \end{cases},\\
g_2:  
 t &\mapsto \begin{cases}
 0  & \text{if } t \in  \llbracket 0, z \rrbracket   \\
\frac{1-\alpha}{\tau-z}(t -z) + \alpha  - \frac{\alpha}{z}t & \text{if } t \in  \llbracket z+1, \tau \rrbracket   \\
 1 - \frac{\alpha}{z} \tau    & \text{if } t \in  \llbracket \tau+1, T  \rrbracket   
 \end{cases}.
\end{align*}

Next, we construct a coding scheme with access function $g$ from two ramp secret sharing schemes with the normalized access functions $\left(\frac{\alpha}{z} \tau \right)^{-1}  g_1$ and $\left(1-\frac{\alpha}{z} \tau  \right)^{-1}g_2$.
By \cite{yamamoto1986secret,blakley1984security},  there exist a prime $q$ and $n' \in \mathbb{N}$ such that one can construct an optimal $(\tau,\tau,T)$ ramp secret sharing (with access function $\left(\frac{\alpha}{z} \tau \right)^{-1}  g_1$) that uses $\rho^{(1)}$ random symbols at the encoder and yields the shares $(M^{(1)}_t)_{t\in\mathcal{T}}$ for a secret $F_1 \in \textup{GF}(q^{n_1})$ with $n_1 = \frac{\alpha}{z}\tau  n'$ and $\rho^{(1)}=0$, and a $(\tau,\tau -z,T)$ ramp secret sharing (with access function $\left(1-\frac{\alpha}{z} \tau  \right)^{-1}g_2$) that uses $\rho^{(2)}$ random symbols at the encoder and yields the shares $(M^{(2)}_t)_{t\in\mathcal{T}}$ for a secret $F_2 \in \textup{GF}(q^{n_2})$, independent of $F_1$, with $n_2 =\left(1-\frac{\alpha}{z} \tau  \right) n'$ and $\rho^{(2)}=H(F_2) \frac{z}{\tau -z}$. Then, define $F \triangleq (F_1,F_2)$  and for any $t \in \mathcal{T}$, $M_t \triangleq (M_t^{(1)},M_t^{(2)})$. By~\cite[Th. 3]{yoshida2018optimal}, this defines a coding scheme with access function $g$ such that for any $t\in \mathcal{T}$, $\frac{\lambda_t (\alpha,z,\tau)}{H(F)} = \Delta_{g_1} + \Delta_{g_2},$
where for $i \in \{ 1,2\}$, $\Delta_{g_i} \triangleq \max_{t \in \llbracket  0, T-1 \rrbracket} \left( g_i(t+1) - g_i(t) \right)$. Hence, by remarking that $\Delta_{g_1} = \frac{\alpha}{z}$ and $\Delta_{g_2} = \frac{1-\alpha}{\tau - z} - \frac{\alpha}{z}$, we obtain for any $t\in \mathcal{T}$,
$\frac{\lambda_t (\alpha,z,\tau)}{H(F)} = \frac{1-\alpha}{\tau - z}. $ 
Moreover, $\frac{\rho (\alpha,z,\tau)}{H(F)} 
   = \frac{\rho^{(1)} + \rho^{(2)}}{H(F)}
  = \frac{H(F_2) }{H(F)} \frac{z}{\tau -z} 
 = \frac{n_2 }{n_1+n_2} \frac{z}{\tau -z} 
   =  \left(1-\frac{\alpha}{z} \tau  \right)  \frac{z}{\tau - z} 
 = \frac{z -  \tau \alpha}{\tau - z}. $

\section{Converse proof of Theorem \ref{thconverse}} \label{secproofc1}
Under the leakage symmetry Condition \eqref{eqlt}, we prove lower bounds on the individual share size and the necessary amount of local randomness at the encoder in Sections \ref{secesss} and \ref{secproofc2}, respectively. 
\subsection{Lower bound on individual share size} \label{secesss}
Let $\tau \in \mathcal{T}$, $\alpha \in [0,1]$, $z \in \llbracket 1 , \tau -1 \rrbracket$, and consider an $(\alpha,z)$-private $(\tau,(\lambda_t)_{t\in\mathcal{T}},\rho)$ coding scheme for some $(\lambda_t)_{t\in \mathcal{T}}\in \mathbb{N}^T$, $\rho \in \mathbb{N}$, as defined in Definition \ref{def} under the leakage symmetry Condition \eqref{eqlt}.  In Sections \ref{secsubb} and \ref{secsuba}, we prove that for any $t\in \mathcal{T}$,
\begin{align} 
\frac{\lambda_t}{H(F)} &\geq \frac{1-\alpha}{\tau-z}, \label{eqlambda2}\\
\text{ and }\frac{\lambda_t}{H(F)} &\geq \frac{1}{\tau}, \label{eqlambda1}
\end{align}
 respectively. We will thus deduce from \eqref{eqlambda2} and \eqref{eqlambda1} that for any $t\in \mathcal{T}$, $\frac{\lambda_t}{H(F)} \geq \max \left( \frac{1-\alpha}{\tau-z}, \frac{1}{\tau} \right)$.

\subsubsection{Proof of the first lower bound \eqref{eqlambda2} on $\lambda_t$}\label{secsubb}
Fix $t \in \mathcal{T}$. For $i \in \llbracket z , \tau -1\rrbracket$, define $\mathcal{S}_i \triangleq \begin{cases} \llbracket 1,i\rrbracket & \text{ if } t>i \\ \llbracket 1,i +1 \rrbracket \backslash \{ t \} & \text{ if } t \leq i \end{cases}$  and $\mathcal{S}_{\tau} \triangleq \mathcal{S}_{\tau-1} \cup \{ t\}$. Then, for $i \in \llbracket  z+1, \tau -1 \rrbracket$, we have
\begin{align}
	&H(M_t|M_{\mathcal{S}_i}) \nonumber \\
	&\stackrel{(a)}= H(M_t F|M_{\mathcal{S}_i}) - H(F|M_{\mathcal{S}_i}M_t)  \nonumber \\
	&\stackrel{(b)}= H(F|M_{\mathcal{S}_i}) + H(M_t |F M_{\mathcal{S}_i}) - H(F|M_{\mathcal{S}_i}M_t)  \label{eq1} \\ \nonumber
	&\stackrel{(c)}= (1- C_{i}) H(F)+ H(M_t |F M_{\mathcal{S}_i})  - (1- C_{i+1}) H(F)  \\ 
	&= (C_{i+1}- C_{i}) H(F)  + H(M_t |F M_{\mathcal{S}_i}), \label{eq2}
\end{align}
where\begin{enumerate}[(a)]
\item and (b) hold by the chain rule; 
  \setcounter{enumi}{2}
\item  holds for some constants $C_{i}$ and $C_{i+1}$ by \eqref{eqlt}.
\end{enumerate} Next, we~have
\begin{align}	&H(M_t|M_{\mathcal{S}_{\tau}}) \nonumber \\
	&\stackrel{(a)}= H(F|M_{\mathcal{S}_{\tau}})  + H(M_t |F M_{\mathcal{S}_{\tau}}) - H(F|M_{\mathcal{S}_{\tau}}M_t)  \nonumber \\ 
	&\stackrel{(b)}=   H(M_t |F M_{\mathcal{S}_{\tau}})   \nonumber \\ 
		&\stackrel{(c)}= (C_{\tau+1}- C_{\tau}) H(F)  + H(M_t |F M_{\mathcal{S}_{\tau}}), 
 \label{eq4}
\end{align}
where
\begin{enumerate}[(a)]
\item holds as in \eqref{eq1}; 
\item holds by \eqref{eqreq1};
\item  holds by defining $C_{\tau+1} \triangleq C_{\tau} = 1$.
\end{enumerate}
 We also have
\begin{align}
	&H(M_t|M_{\mathcal{S}_z}) \nonumber\\
	&\stackrel{(a)}= H(F|M_{\mathcal{S}_z})  + H(M_t |F M_{\mathcal{S}_z}) - H(F|M_{\mathcal{S}_z}M_t)  \nonumber \\ \nonumber
	&\stackrel{(b)}\geq  (1- \alpha) H(F)  + H(M_t |F M_{\mathcal{S}_z}) - (1- C_{z+1}) H(F)  \\
	&= (C_{z+1}- \alpha) H(F)  + H(M_t |F M_{\mathcal{S}_z}) , \label{eq3}
\end{align}
where 
\begin{enumerate}[(a)]
\item holds as in \eqref{eq1}; 
\item  holds by \eqref{eqreq2} and \eqref{eqlt}. 
\end{enumerate}
In the following, for convenience, we define  $C_z \triangleq \alpha$.  
Next, we have
\begin{align}
	&H(M_t) \nonumber \\
	& \stackrel{(a)}\geq H(M_t|M_{\mathcal{S}_{z}})  \label{eqinter} \\ \nonumber
	& \stackrel{(b)}= H(M_t|M_{\mathcal{S}_{z}}) - H(M_t | M_{\mathcal{S}_{\tau}})\\ \nonumber
	& = \sum_{i=z}^{\tau-1} \left(H(M_t|M_{\mathcal{S}_i}) - H(M_t|M_{\mathcal{S}_{i+1}}) \right)\\ \nonumber 
	& \stackrel{(c)}\geq \sum_{i=z}^{\tau-1} \left[ (C_{i+1}- C_{i}) H(F)  + H(M_t |F M_{\mathcal{S}_i}) \right. \\ \nonumber
 & \phantom{--} \left. - (C_{i+2}- C_{i+1}) H(F)  - H(M_t |F M_{\mathcal{S}_{i+1}}) \right]^+\\ \nonumber
	& \stackrel{(d)} \geq H(F) \sum_{i=z}^{\tau-1}  [ (C_{i+1}- C_{i})  - (C_{i+2} - C_{i+1}) ]^+\\ \nonumber
	& \stackrel{(e)}= H(F) \sum_{i=z+1}^{\tau}  [ (\phi(i)- \phi(i-1))  - (\phi(i+1) - \phi(i)) ]^+ \\
    & \stackrel{(f)}\geq  H(F) \min_{\phi \in \mathcal{F}} \sum_{i=z+1}^{\tau}  [ (\phi(i)- \phi(i-1))  - (\phi(i+1) - \phi(i)) ],\!\!\!^+ \label{eqmin}
	\end{align}
	where \begin{enumerate}[(a)]
\item holds because conditioning does not increase entropy;
\item  holds because $t \in \mathcal{S}_{\tau}$; 
\item  holds because $H(M_t|M_{\mathcal{S}_i}) - H(M_t|M_{\mathcal{S}_{i+1}}) \geq 0$ (conditioning does not increase entropy and $\mathcal{S}_i \subset \mathcal{S}_{i+1}$) and by \eqref{eq2}, \eqref{eq4}, \eqref{eq3};
\item holds because conditioning does not increase entropy; 
\item holds with the function $\phi: \llbracket z,\tau+1 \rrbracket \to [0,1]$ defined such that $\phi(i) = C_i$ for $i \in \llbracket z, \tau+1 \rrbracket$;
\item   holds with the minimum taken over the set $\mathcal{F}$ of  all the functions $\phi: \llbracket z,\tau+1 \rrbracket \to [0,1]$ that are non-decreasing (by \eqref{eqlt} because for any $\mathcal{S} \subset \mathcal{S}' \subset \mathcal{T}$, $\frac{I(F;M_{\mathcal{S}})}{H(F)} \leq \frac{I(F;M_{\mathcal{S}'})}{H(F)}$) and such that $\phi(z) = \alpha$ (because $ C_z = \alpha$), $\phi(\tau+1)=\phi(\tau)=1$ (because $C_{\tau+1} = C_{\tau} = 1$).  
	\end{enumerate}
We now lower bound the minimum in the right-hand side of \eqref{eqmin} by an expression that only depends on the concave envelopes of the access functions that appear in the objective function. This  allows us to conclude that a piecewise linear access function is solution to the optimization. Specifically, let $\phi \in \mathcal{F}$ and let $\phi^+$ be the concave envelope of $\phi$ over $\llbracket z,\tau+1 \rrbracket$, i.e., for $i \in \llbracket z,\tau+1 \rrbracket$, $\phi^+(i) \triangleq \min \{ \psi(i) : \psi \geq \phi, \psi \text{ is concave}\}$. Note that $\phi^+(z) = \phi(z)$ and $\phi^+(\tau+1) = \phi(\tau+1)$. Then, for any $i \in \llbracket z+1,\tau \rrbracket$ such that $\phi(i) = \phi^+(i)$, we~have 
	\begin{align}
		&[ (\phi(i)- \phi(i-1))  - (\phi(i+1) - \phi(i)) ]^+ \nonumber \\
		& \geq  (\phi(i)-\phi(i-1))  - (\phi(i+1) - \phi(i))  \nonumber \\ \nonumber
		&\stackrel{(a)} \geq (\phi(i)- \phi^+(i-1))  - (\phi^+(i+1) - \phi(i)) \\
		& \stackrel{(b)}= (\phi^+(i)- \phi^+(i-1))  - (\phi^+(i+1) - \phi^+(i)) , \label{eqm1}
	\end{align}
		where \begin{enumerate}[(a)]
\item holds because $\phi^+ \geq \phi$;
\item holds because $\phi(i) = \phi^+(i)$.
\end{enumerate}
 Moreover, for any $i \in \llbracket z+1,\tau \rrbracket$ such that $\phi(i) \neq \phi^+(i)$, we have 
	\begin{align}
	    & [ (\phi(i)- \phi(i-1))  - (\phi(i+1) - \phi(i))  ]^+  \nonumber \\ \nonumber
		& \geq  0 \\
		& = (\phi^+(i)- \phi^+(i-1))  - (\phi^+(i+1) - \phi^+(i)), \label{eqm2}
	\end{align}
	where the last equality holds because $\phi^+$ is linear between $i-1$ and $i+1$, i.e., $\phi^+(i)- \phi^+(i-1)  = \phi^+(i+1) - \phi^+(i)$ - details are provided in  Appendix \ref{AppendixA}. 
	
	Next, we have 
	\begin{align}
		& \sum_{i=z+1}^{\tau}  [ (\phi(i)- \phi(i-1))  - (\phi(i+1) - \phi(i)) ]^+ \nonumber \\ \nonumber
		& \stackrel{(a)}  \geq 		 \sum_{i=z+1}^{\tau}  (\phi^+(i)- \phi^+(i-1))  - (\phi^+(i+1) - \phi^+(i)) )\\\nonumber
 		& = \phi^+(z+1)- \phi^+(z) - \phi^+(\tau+1) + \phi^+(\tau) \\ \nonumber
 		 		& \stackrel{(b)}= \phi^+(z+1)- \phi^+(z) \\ 
 		 		 		 		& \stackrel{(c)} \geq  \frac{1-\alpha}{\tau-z}  , \label{eqf1}
	\end{align}
	where
	\begin{enumerate}[(a)]
\item holds by \eqref{eqm1} and \eqref{eqm2};
\item holds because $\phi^+(\tau+1)= \phi^+(\tau) = 1$;
\item holds  because 
	$\phi^+(z+1) - \phi^+(z) \geq (\phi^+(\tau) - \phi^+(z))/(\tau-z)$ by concavity of $\phi^+$ and where we have used that $\phi^+(\tau) =1$ and $\phi^+(z)=\phi(z)= \alpha$.
	\end{enumerate}
		Finally, we have
	\begin{align*}
\lambda_t &\geq H(M_t) \\
& \geq H(F) \frac{1-\alpha}{\tau-z},
	\end{align*}
	where the last inequality holds by \eqref{eqmin} and \eqref{eqf1}, which is valid for any $\phi \in \mathcal{F}$.

\subsubsection{Proof of the second lower bound \eqref{eqlambda1} on $\lambda_t$} \label{secsuba}
Note that in the proof of  \eqref{eqlambda2}, one can substitute the variable $z$ by zero such that one can show 
\begin{align*}
\lambda_t &\geq H(M_t) \\
& \geq H(F)( \phi^+(1)- \phi^+(0))\\
& \geq H(F) \frac{1}{\tau},
	\end{align*}
	where the last inequality holds because 
	$\phi^+(1) - \phi^+(0) \geq (\phi^+(\tau) - \phi^+(0))/\tau$ by concavity of $\phi^+$ and where we have used that $\phi^+(\tau) =1$ and $\phi^+(0)=0$.
	
\subsection{Lower bound on the amount of local randomness} \label{secproofc2}

Let $\tau \in \mathcal{T}$, $\alpha \in [0,1]$, $z \in \llbracket 1 , \tau -1 \rrbracket$, and consider an $(\alpha,z)$-private $(\tau,(\lambda_t)_{t\in\mathcal{T}},\rho)$ coding scheme for some $(\lambda_t)_{t\in \mathcal{T}}\in \mathbb{N}^T$, $\rho \in \mathbb{N}$, as defined in Definition \ref{def} under the leakage symmetry Condition \eqref{eqlt}. Then, we have
\begin{align*}
& \rho + H(F)\\
& \stackrel{(a)}{=} H(R) + H(F)\\ 
& \stackrel{(b)}{=}  H(RF) \\
&\stackrel{(c)}{\geq}  H(M_{\mathcal{T}}) \\
&\stackrel{(d)} = H(M_{\llbracket 1,z \rrbracket}) + H(M_{\mathcal{T} \backslash \llbracket 1,z\rrbracket } |M_{\llbracket 1,z \rrbracket} )\\
& \stackrel{(e)}{=} \sum_{t=1}^z H(M_t | M_{\llbracket 1,t-1\rrbracket})+ H(M_{\mathcal{T} \backslash \llbracket 1,z\rrbracket } |M_{\llbracket 1,z \rrbracket} )  \\
& \stackrel{(f)}{\geq} \sum_{t=1}^z H(M_t | M_{\mathcal{S}_{z,t} }) + H(M_{\mathcal{T} \backslash \llbracket 1,z\rrbracket} |M_{\llbracket 1,z \rrbracket} ) \\
& \stackrel{(g)}{\geq} z\frac{1-\alpha}{\tau-z} H(F) + H(M_{\mathcal{T} \backslash \llbracket 1,z \rrbracket} |M_{\llbracket 1,z \rrbracket} ) \\
& \stackrel{(h)}{=} z\frac{1-\alpha}{\tau-z} H(F) + H(M_{\mathcal{T} \backslash \llbracket 1,z \rrbracket } F|M_{\llbracket 1,z \rrbracket} ) \\
&\phantom{--}-H(F|M_{\mathcal{T}\backslash \llbracket 1,z \rrbracket } M_{\llbracket 1,z \rrbracket} )\\
& \stackrel{(i)}{\geq} z\frac{1-\alpha}{\tau-z} H(F) + H( F|M_{\llbracket 1,z \rrbracket} ) \\
& \stackrel{(j)}\geq z\frac{1-\alpha}{\tau-z}H(F) + (1-\alpha)H(F) \\
& = \tau \frac{1- \alpha}{\tau-z}H(F), \numberthis \label{eqrhofinal}
\end{align*}
where 
\begin{enumerate}[(a)]
\item holds by uniformity of $R$;
\item holds by independence between $F$ and $R$;
\item holds because $M_{\mathcal{T}}$ is a deterministic function of $(R,F)$;
\item and (e) hold by the chain rule;   \setcounter{enumi}{5}
\item holds because conditioning does not increase entropy and we have defined $\mathcal{S}_{z,t} \triangleq \llbracket 1,z+1 \rrbracket \backslash \{ t\}$ for $t \in \llbracket 1 ,z \rrbracket $;
\item holds because for any $t \in \llbracket 1 ,z \rrbracket $, $H(M_t | M_{\mathcal{S}_{z,t} }) \geq \frac{1-\alpha}{\tau-z}H(F)$, by the converse proof of Theorem \ref{thconverse} starting from \eqref{eqinter};
\item holds by the chain rule;
\item holds because $H(F|M_{\mathcal{T}\backslash \llbracket 1,z \rrbracket } M_{\llbracket 1,z \rrbracket} ) = H(F|M_{\mathcal{T}})= 0$ by \eqref{eqreq1}, and $H(M_{\mathcal{T} \backslash \llbracket 1,z \rrbracket } F|M_{\llbracket 1,z \rrbracket} ) \geq H(F|M_{\llbracket 1,z \rrbracket} ) $ by the chain rule and positivity of conditional entropy;
\item holds by \eqref{eqreq2}.
\end{enumerate}
 Finally, from \eqref{eqrhofinal}, we have
\begin{align*}
 \rho 
\geq \left(\tau\frac{1-\alpha}{\tau-z} -1\right)H(F)   =  \frac{z- \tau \alpha }{\tau-z}H(F),
\end{align*}
and since we also have $\rho \geq 0$, we conclude
\begin{align*}
 \frac{\rho}{H(F)} 
&\geq  \frac{[z- \tau \alpha]^+ }{\tau-z}.
\end{align*}

\section{Concluding remarks} \label{concl}
We considered a setting where a file must be stored in $L$ servers such that: (i) any $\tau$ servers that pool their information together can reconstruct the file, and (ii) any $z$ servers cannot learn more than a fraction $\alpha\in[0,1]$ of the file, where $\tau$, $z$, and $\alpha$ are parameters to be chosen by the system designer. This setting generalizes ramp secret sharing in that  information leakage about the file  is allowed up to a fraction $\alpha$, and goes beyond existing works on uniform secret sharing by considering share size optimization over a set of access functions  rather than for a fixed access function. Specifically, for given parameters $\tau$, $z$, $\alpha$, and under the leakage symmetry assumption that any set of colluding servers must have the same information leakage about the file that any other set of colluding servers of same size, we derived the optimal individual share size at each server. In the absence of any leakage symmetry, we also derived the optimal sum of the share sizes at all the servers and the optimal amount of local randomness needed at the encoder. As a byproduct, in the case $\alpha=0$, our results prove that among all uniform secret sharing schemes for our model, linear ramp secret sharing schemes require the smallest individual share size.

\appendices

\section{Proof of \eqref{eqm2}} \label{AppendixA}

By contradiction, assume that  $\phi^+$ is not linear between $i-1$ and $i+1$, then we must have  
	\begin{align}
	\phi^+(i) > \frac{\phi^+(i+1)+ \phi^+(i-1)}{2} \label{eqcontrad}
\end{align}	
	 since $\phi^+$ is concave. Next, we have a contradiction by constructing $\psi_i$, a concave function such that $\phi \leq \psi_i < \phi^+$, as follows: 
	\begin{align*}
	\psi_i : \llbracket z , \tau +1 \rrbracket &\to \mathbb{R}\\
	 j & \mapsto \begin{cases}
	  \phi^+(j) & \text{if } j \neq i\\
	  \max \left( \frac{\phi^+(i+1)+ \phi^+(i-1)}{2} , \phi(i) \right) &\text{if } j =i
	 \end{cases}.
	\end{align*}
We have $\phi \leq \psi_i$ (since $\phi \leq \phi^+$), and 	$\psi_i < \phi^+$ by \eqref{eqcontrad} and because $\phi^+(i) > \phi(i)$ (since $\phi^+ \geq \phi$ and $\phi^+(i) \neq \phi(i)$). Then, to show concavity of $\psi_i$, it is sufficient to show that  $\psi_i^{\Delta}$ is non-increasing, where $\psi_i^{\Delta}$ is defined as 
\begin{align*}
\psi_i^{\Delta}:\llbracket z, \tau  \rrbracket &\to \mathbb{R}\\
	 j & \mapsto \psi_i(j+1) - \psi_i(j).
\end{align*}
For $j \in \llbracket z , i -3 \rrbracket \cup \llbracket i+1, \tau \rrbracket $, we have 
\begin{align} \label{eqmonotone1}
\psi_i^{\Delta}(j+1) \leq \psi_i^{\Delta}(j) 
\end{align} by definition of  $\psi_i^{\Delta}$ and concavity of $\phi^+$. Then, we have
\begin{align*}
\psi_i^{\Delta}(i-1) 
& \stackrel{(a)} = \psi_i(i) - \psi_i(i-1)  \\
&\stackrel{(b)} = \psi_i(i) -\phi^+(i-1) \\
& \stackrel{(c)}\leq \phi^+(i) -\phi^+(i-1) \\
& \stackrel{(d)}\leq \phi^+(i-1) - \phi^+(i-2) \\
& \stackrel{(e)}= \psi_i(i-1) - \psi_i(i-2) \\
& \stackrel{(f)}= \psi_i^{\Delta}(i-2),
\end{align*}
where \begin{enumerate}[(a)]
\item and (f) hold by definition of $\psi_i^{\Delta}$;
\item and (e) hold by definition of $\psi_i$;
\item holds because $\psi_i < \phi^+$;
\item holds by concavity of $\phi^+$.\end{enumerate}
 Then, we have
\begin{align*}
\psi_i^{\Delta}(i) 
&\stackrel{(a)}= \psi_i(i+1) - \psi_i(i)  \\
&\stackrel{(b)}= \phi^+(i+1) - \psi_i(i)  \\
& \stackrel{(c)} \leq \psi_i(i) - \phi^+(i-1) \\
&\stackrel{(d)} = \psi_i(i) - \psi_i(i-1) \\
&\stackrel{(e)} = \psi_i^{\Delta}(i-1), \numberthis \label{eqmonotone2}
\end{align*}
where 
\begin{enumerate}[(a)]
\item and (e) hold by definition of $\psi_i^{\Delta}$;
\item and (d) hold by definition of $\psi_i$;
\item holds because $\frac{\phi^+(i+1)+ \phi^+(i-1)}{2} \leq \psi_i (i)$.
\end{enumerate}
 Then, we also have
\begin{align*}
\psi_i^{\Delta}(i+1) 
& \stackrel{(a)}= \psi_i(i+2) - \psi_i(i+1)  \\
&\stackrel{(b)}= \phi^+(i+2) - \phi^+(i+1)  \\
&\stackrel{(c)} \leq \phi^+(i+1) - \phi^+(i)  \\
&\stackrel{(d)} \leq \phi^+(i+1) - \psi_i(i) \\
&\stackrel{(e)} = \psi_i(i+1) - \psi_i(i) \\
&\stackrel{(f)} = \psi_i^{\Delta}(i), \numberthis \label{eqmonotone3}
\end{align*}
where 
\begin{enumerate}[(a)]
\item and (f) hold by definition of $\psi_i^{\Delta}$;
\item and (e) hold by definition of $\psi_i$; 
\item holds by concavity of $\phi^+$; 
\item holds because $\psi_i < \phi^+$.
\end{enumerate}
 Hence, by \eqref{eqmonotone1}, \eqref{eqmonotone2}, and \eqref{eqmonotone3}, $\psi_i^{\Delta}$ is non-increasing and we have thus proved \eqref{eqm2} by contradiction.

 \section{Converse proof of Theorem \ref{th3}} \label{secconvth3}

Let $\tau \in \mathcal{T}$, $\alpha \in [0,1]$, $z \in \llbracket 1 , \tau -1 \rrbracket$, and consider an $(\alpha,z)$-private $(\tau,(\lambda_t)_{t\in\mathcal{T}},\rho)$ coding scheme for some $(\lambda_t)_{t\in \mathcal{T}}\in \mathbb{N}^T$, $\rho \in \mathbb{N}$, as defined in Definition \ref{def}. We prove the lower bounds
\begin{align}
\frac{\sum_{t\in \mathcal{T}}\lambda_t}{H(F)} &\geq T \max \left( \frac{1-\alpha}{\tau-z}, \frac{1}{\tau} \right), \label{eqconva1}\\
\frac{\rho}{H(F)} &\geq \frac{[z - \tau \alpha]^+}{\tau-z} , \label{eqconva2}\end{align}
in Appendices \ref{conv1} and \ref{conv2}, respectively. 

\subsection{Lower bound on the sum of the share sizes} \label{conv1}

 For $\mathcal{W} \subseteq \mathcal{T}$ and $\mathcal{S} \subseteq \mathcal{T} \backslash \mathcal{W} $ such that $|\mathcal{W}|=z$ and $|\mathcal{S}|=\tau-z$, we have 
 \begin{align*}
   \sum_{l \in \mathcal{S}} H(M_{l})  
 & \stackrel{(a)}\geq   H(M_{\mathcal{S}})\\
 & \stackrel{(b)}\geq H(M_{\mathcal{S}}| M_{\mathcal{W}} )  \\
 & \geq    I(M_{\mathcal{S}}; F | M_{\mathcal{W}}) \\
 & =    H(F | M_{\mathcal{W}}) - H(F|M_{\mathcal{W}\cup\mathcal{S}} ) \\
 &\stackrel{(c)}=     H( F | M_{\mathcal{W}})   \\
 & =   H(F) - I( F ; M_{\mathcal{W}})   \\
 & \stackrel{(d)}\geq (1-\alpha) H( F)  , \numberthis \label{eqRanDint}
 \end{align*}
 where 
 \begin{enumerate}[(a)]
\item and (b)  hold by the chain rule and because conditioning does not increase entropy;   \setcounter{enumi}{2}
\item  holds by \eqref{eqreq1} because  $|\mathcal{S} \cup \mathcal{W}|=\tau$;
\item holds by \eqref{eqreq2} because $|\mathcal{W}|=z$. 
\end{enumerate}
Then, by defining $\Theta \triangleq \frac{T}{\tau-z} {T \choose z}^{-1} {{T-z}\choose{\tau-z}}^{-1}$,  we have
   \begin{align*} 
& T\frac{1 - \alpha}{\tau-z} H( F)\\ 
 & \stackrel{(a)}= \Theta \sum_{ \substack{\mathcal{W} \subseteq \mathcal{T}  \\  |\mathcal{W} | = z }} \sum_{ \substack{\mathcal{S} \subseteq  \mathcal{W}^c \\  |\mathcal{S} | = \tau-z }}  (1-\alpha)H( F ) \\ 
   &  \stackrel{(b)} \leq \Theta \sum_{ \substack{\mathcal{W} \subseteq \mathcal{T}  \\  |\mathcal{W} | = z }} \sum_{ \substack{\mathcal{S} \subseteq  \mathcal{W}^c \\  |\mathcal{S} | = \tau-z }}  \sum_{l \in \mathcal{S}} H(M_{l})  \\
   & \stackrel{(c)} = \Theta \sum_{ \substack{\mathcal{W} \subseteq \mathcal{T}  \\  |\mathcal{W} | = z }} { T-z - 1 \choose \tau-z-1 }  \sum_{l \in  \mathcal{W}^c } H(M_{l})  \\
   &  \stackrel{(d)}  = \Theta  { T-z - 1 \choose \tau-z-1 } \sum_{ \substack{\mathcal{W} \subseteq \mathcal{T}  \\  |\mathcal{W} | =T- z }}   \sum_{l \in   \mathcal{W}} H(M_{l})  \\
      & \stackrel{(e)}= \Theta  { T-z - 1 \choose \tau-z-1 } { T - 1 \choose T-z-1 }   \sum_{l \in   \mathcal{T}} H(M_{l})  \\
            & =  \sum_{l \in   \mathcal{T}} H(M_{l})  \\
            & \stackrel{(f)} \leq  \sum_{l \in   \mathcal{T}} \lambda_l,  \numberthis \label{eqratecsum}
 \end{align*}
 where 
\begin{enumerate}[(a)]
\item holds because ${T \choose z}^{-1} {{T-z}\choose{\tau-z}}^{-1} \sum_{ \substack{\mathcal{W} \subseteq \mathcal{T}  \\  |\mathcal{W} | = z }} \sum_{ \substack{\mathcal{S} \subseteq  \mathcal{W}^c \\  |\mathcal{S} | = \tau-z }} 1 = 1$;
 \item holds by \eqref{eqRanDint};
\item holds because for any $l\in\mathcal{W}^c$, $H(M_{l})$ appears exactly ${ T-z - 1 \choose \tau-z-1 }$ times in the term $\sum_{ \substack{\mathcal{S} \subseteq \mathcal{W}^c  \\  |\mathcal{S} | = \tau-z }}  \sum_{l \in \mathcal{S}} H(M_{l})$, note that this observation is similar to \cite[Lemma 3.2]{de1999multiple};
\item holds by a change of variables in the sums; 
\item holds because for any $l\in\mathcal{T}$, $H(M_{l})$ appears exactly ${ T - 1 \choose T-z-1 }$ times in the sum $\sum_{ \substack{\mathcal{W} \subseteq \mathcal{T}  \\  |\mathcal{W} | =T- z }}   \sum_{l \in   \mathcal{W}} H(M_{l})$;
\item holds by definition of $M_{l}$, $l\in\mathcal{T}$. 
 \end{enumerate}
Then,
 for any $\mathcal{S} \subseteq \mathcal{T}$ such that $|\mathcal{S}|=\tau$, we have
\begin{align}
 \sum_{l \in\mathcal{S}} H(M_{l}) 
&\geq H(M_{\mathcal{S}}) \nonumber\\
& \geq H(F), \label{eqfirstlb}
\end{align}
where the last inequality holds because if $H(M_{\mathcal{S}})< H(F)$, then by \eqref{eqreq1}, we have $H(F|M_{\mathcal{S}})=0$ and then the contradiction $H(M_{\mathcal{S}}|F) =H(M_{\mathcal{S}}) -H(F)<0$. We also have
\begin{align*}
\frac{T}{\tau} H( F) 
 & = \frac{T}{\tau}  {{T}\choose{\tau}}^{-1}  \sum_{ \substack{\mathcal{S} \subseteq  \mathcal{T} \\  |\mathcal{S} | = \tau }}  H( F ) \\ 
   &  \stackrel{(a)} \leq  \frac{T}{\tau}  {{T}\choose{\tau}}^{-1}  \sum_{ \substack{\mathcal{S} \subseteq  \mathcal{T} \\  |\mathcal{S} | = \tau }}  \sum_{l\in \mathcal{S}} H(M_{l})  \\
      &  =  \frac{T}{\tau}  {{T}\choose{\tau}}^{-1}    {{T-1}\choose{\tau-1}} \sum_{l\in \mathcal{T}} H(M_{l})  \\
            & =  \sum_{l \in   \mathcal{T}} H(M_{l})  \\
            & \stackrel{(b)} \leq  \sum_{l \in   \mathcal{T}} \lambda_l,  \numberthis \label{eqratecsum2}
            \end{align*}
where\begin{enumerate}[(a)]
 \item holds by \eqref{eqfirstlb};
\item holds by definition of $M_{l}$, $l\in\mathcal{T}$. 
 \end{enumerate}

Finally, we conclude from \eqref{eqratecsum} and \eqref{eqratecsum2} that \eqref{eqconva1}~holds. 
 
\subsection{Lower bound on the amount of local randomness} \label{conv2}
Let $\mathcal{V} \subseteq \mathcal{T}$ such that $v \triangleq |\mathcal{V}|< z$. For $\mathcal{W} \subseteq \mathcal{T} \backslash \mathcal{V}$ and $\mathcal{S} \subseteq \mathcal{T} \backslash (\mathcal{W} \cup \mathcal{V}) $ such that $|\mathcal{W}|= z - v$ and $|\mathcal{S}|=\tau-z$, we have 
 \begin{align*}
   \sum_{l \in \mathcal{S}} H(M_{l}  |M_{\mathcal{V}} )
 &\stackrel{(a)}  \geq   H(M_{\mathcal{S}} |M_{\mathcal{V}} )\\
 & \stackrel{(b)}\geq H(M_{\mathcal{S}} |M_{\mathcal{V}\cup \mathcal{W}} )   \\
    & \stackrel{(c)} \geq    (1- \alpha) H( F) , \numberthis \label{eqRanDint0}
 \end{align*}
where 
\begin{enumerate}[(a)]
\item and (b) hold by the chain rule and because conditioning does not increase entropy;  \setcounter{enumi}{2}
\item  holds similar to \eqref{eqRanDint} with the substitution $\mathcal{W} \leftarrow \mathcal{V} \cup \mathcal{W}$, which is possible because $|\mathcal{V}\cup \mathcal{W}\cup \mathcal{S}| = \tau$ and $|\mathcal{V}\cup \mathcal{W}|=z$.
\end{enumerate}
Then, by defining $\Lambda \triangleq \frac{1}{\tau-z} {T-v \choose z-v}^{-1} {{T-z}\choose{\tau-z}}^{-1} $, we have
   \begin{align*} 
& \frac{1-\alpha}{\tau-z} H( F )\\
 & = \Lambda \sum_{ \substack{\mathcal{W} \subseteq \mathcal{T}\backslash \mathcal{V}  \\  |\mathcal{W} | = z -v}} \sum_{ \substack{\mathcal{S} \subseteq  \mathcal{T} \backslash (\mathcal{W} \cup \mathcal{V}) \\  |\mathcal{S} | = \tau-z }} (1-\alpha)  H( F ) \\ 
   &  \stackrel{(a)} \leq \Lambda \sum_{ \substack{\mathcal{W} \subseteq \mathcal{T}\backslash \mathcal{V}  \\  |\mathcal{W} | = z -v}} \sum_{ \substack{\mathcal{S} \subseteq  \mathcal{T} \backslash (\mathcal{W} \cup \mathcal{V}) \\  |\mathcal{S} | = \tau-z }} \sum_{l \in \mathcal{S}}  H(M_{l}  |M_{\mathcal{V}} ) \\
      &  \stackrel{(b)} = \Lambda\sum_{ \substack{\mathcal{W} \subseteq \mathcal{T}\backslash \mathcal{V}  \\  |\mathcal{W} | = z -v}}   { T - z  -  1 \choose \tau - z - 1 }   \sum_{ l \in  \mathcal{T} \backslash (\mathcal{W} \cup \mathcal{V})  }     H(M_{l}  |M_{\mathcal{V}} ) \\
            &  \stackrel{(c)} = \Lambda{ T-z - 1 \choose \tau-z-1 } \sum_{ \substack{\mathcal{W} \subseteq \mathcal{T} \backslash  \mathcal{V}  \\  |\mathcal{W} | = T -z}} \sum_{ l \in  \mathcal{W}   } H(M_{l}  |M_{\mathcal{V}} ) \\
                        &  \stackrel{(d)} = \Lambda{ T-z  - 1 \choose \tau-z-1 } {T-v -1\choose T-z-1}   \sum_{ l \in  \mathcal{T} \backslash   \mathcal{V}  }  H(M_{l}  |M_{\mathcal{V}} )  \\
      & = \frac{1}{T-v}\sum_{ l \in  \mathcal{T} \backslash   \mathcal{V}  } H(M_{l}  |M_{\mathcal{V}} ) \\
    & \stackrel{(e)} \leq  \frac{1}{T-v}\sum_{ l \in  \mathcal{T} \backslash   \mathcal{V}  }  H(M_{l^\star(\mathcal{V})}  |M_{\mathcal{V}} ) \\
        & =    H(M_{l^\star(\mathcal{V})}  |M_{\mathcal{V}} ) \\
        & =   H(M_{\mathcal{T}}  |M_{\mathcal{V}} ) -  H(M_{\mathcal{T}}  |M_{\mathcal{V} \cup \{l^\star(\mathcal{V})\}}  )  , \numberthis \label{eqint0}
 \end{align*}
 where 
 \begin{enumerate}[(a)]
\item holds by \eqref{eqRanDint0}; 
\item holds because for any $l\in\mathcal{T} \backslash (\mathcal{W} \cup \mathcal{V})$, the term $ H(M_{l}|M_{\mathcal{V}} )$ appears exactly ${ T-z - 1 \choose \tau-z-1 }$ times in the term $\sum_{ \substack{\mathcal{S} \subseteq  \mathcal{T} \backslash (\mathcal{W} \cup \mathcal{V}) \\  |\mathcal{S} | = \tau-z }} \sum_{l \in \mathcal{S}}  H(M_{l} |M_{\mathcal{V}} )$, this argument is similar to \cite[Lemma 3.2]{de1999multiple};
\item holds by a change of variables in the sums;
\item holds because for any $l\in\mathcal{T}\backslash \mathcal{V} $, $ H(M_{l} |M_{\mathcal{V}} ) $ appears exactly ${ T - v-1 \choose T-z-1 }$ times in the term $\sum_{ \substack{\mathcal{W} \subseteq \mathcal{T} \backslash \mathcal{V}  \\  |\mathcal{W} | =T- z }}   \sum_{l \in   \mathcal{W}}  H(M_{l} | M_{\mathcal{V}} )$;
\item holds with $l^\star(\mathcal{V})  \in \argmax_{l\in\mathcal{T} \backslash \mathcal{V}}  H(M_{l} |M_{\mathcal{V}})$.
\end{enumerate}
Next, define $\mathcal{V}_0 \triangleq \emptyset$ and for $i\in \llbracket 1, z \rrbracket$, $\mathcal{V}_i \triangleq \mathcal{V}_{i-1} \cup \{ l^{\star} (\mathcal{V}_{i-1}) \}$. Then, we~have
 \begin{align*} 
   &\frac{z - \tau \alpha}{\tau-z} H(F) \\
      & =  \left(z\frac{1 - \alpha}{\tau-z} - \alpha \right) H(F)  \\
  &  =  - \alpha H(F) + z\frac{1 -  \alpha}{\tau-z} H( F )\\
 & = - \alpha H(F) + \sum_{i=0}^{z-1} \frac{1-  \alpha}{\tau-z} H( F )\\
        &\stackrel{(a)} \leq  - \alpha H(F) + \sum_{i=0}^{z-1}[ H(M_{\mathcal{T}}  | M_{\mathcal{V}_i}  )  - H(M_{\mathcal{T}}  | M_{\mathcal{V}_{i+1}} ) ]   \\
                & = - \alpha H(F) +  H(M_{\mathcal{T}}  )  - H(M_{\mathcal{T}}  | M_{\mathcal{V}_{z}} ) \\
                                    & \stackrel{(b)} \leq  - \alpha H(F) +  H( F,R  )  - H(M_{\mathcal{T}}  | M_{\mathcal{V}_{z}} ) \\
                    & \stackrel{(c)} =  (1- \alpha) H(F) + H(R) - H(M_{\mathcal{T}}  | M_{\mathcal{V}_{z}} )  \\
                                        & \stackrel{(d)} =  (1- \alpha) H(F) + H(R) - H(FM_{\mathcal{T}}  | M_{\mathcal{V}_{z}} )  \\                                                                            
                                        &  \leq  (1- \alpha) H(F) + H(R) - H(F | M_{\mathcal{V}_{z}} )  \\                                                                            
 &\stackrel{(e)} \leq   H(R) \\
 & = \rho  , \numberthis \label{eqratelrand}
 \end{align*}
 where 
 \begin{enumerate}[(a)]
\item holds by applying $z$ times \eqref{eqint0}  and the definition of $\mathcal{V}_i$, $i \in \llbracket 0, z \rrbracket$;
\item holds because $M_{\mathcal{T}}$ is a deterministic function of $(F,R)$
\item holds by independence between $F$ and $R$; 
\item  holds by the chain rule and because $H(F  | M_{\mathcal{T}} ) = 0$ by \eqref{eqreq1}; 
\item holds because $ - H(F  | M_{\mathcal{V}_{z}}  ) \leq - (1-\alpha) H(F)$ by \eqref{eqreq2}.
\end{enumerate}
 Finally, since we also have $\rho \geq 0$, we conclude from \eqref{eqratelrand} that \eqref{eqconva2} holds.

\bibliographystyle{IEEEtran}
\bibliography{bib}

\begin{thebibliography}{10}
\providecommand{\url}[1]{#1}
\csname url@samestyle\endcsname
\providecommand{\newblock}{\relax}
\providecommand{\bibinfo}[2]{#2}
\providecommand{\BIBentrySTDinterwordspacing}{\spaceskip=0pt\relax}
\providecommand{\BIBentryALTinterwordstretchfactor}{4}
\providecommand{\BIBentryALTinterwordspacing}{\spaceskip=\fontdimen2\font plus
\BIBentryALTinterwordstretchfactor\fontdimen3\font minus
  \fontdimen4\font\relax}
\providecommand{\BIBforeignlanguage}[2]{{%
\expandafter\ifx\csname l@#1\endcsname\relax
\typeout{** WARNING: IEEEtran.bst: No hyphenation pattern has been}%
\typeout{** loaded for the language `#1'. Using the pattern for}%
\typeout{** the default language instead.}%
\else
\language=\csname l@#1\endcsname
\fi
#2}}
\providecommand{\BIBdecl}{\relax}
\BIBdecl

\bibitem{chou2020secure}
R.~A. Chou and J.~Kliewer, ``Secure distributed storage: {R}ate-privacy
  trade-off and {XOR}-based coding scheme,'' in \emph{IEEE International
  Symposium on Information Theory (ISIT)}, 2020, pp. 605--610.

\bibitem{garay2000secure}
J.~A. Garay, R.~Gennaro, C.~Jutla, and T.~Rabin, ``Secure distributed storage
  and retrieval,'' \emph{Theoretical Computer Science}, vol. 243, no. 1-2, pp.
  363--389, 2000.

\bibitem{rawat2016centralized}
A.~S. Rawat, O.~O. Koyluoglu, and S.~Vishwanath, ``Centralized repair of
  multiple node failures with applications to communication efficient secret
  sharing,'' \emph{IEEE Transactions on Information Theory}, vol.~64, no.~12,
  pp. 7529--7550, 2018.

\bibitem{bitar2018staircase}
R.~Bitar and S.~El~Rouayheb, ``Staircase codes for secret sharing with optimal
  communication and read overheads,'' \emph{IEEE Transactions on Information
  Theory}, vol.~64, no.~2, pp. 933--943, 2018.

\bibitem{huang2016communication}
W.~Huang, M.~Langberg, J.~Kliewer, and J.~Bruck, ``Communication efficient
  secret sharing,'' \emph{IEEE Transactions on Information Theory}, vol.~62,
  no.~12, pp. 7195--7206, 2016.

\bibitem{shamir1979share}
A.~Shamir, ``How to share a secret,'' \emph{Communications of the ACM},
  vol.~22, no.~11, pp. 612--613, 1979.

\bibitem{blakley1979safeguarding}
G.~R. Blakley, ``Safeguarding cryptographic keys,'' in \emph{Proceedings of the
  National Computer Conference}, vol.~48, 1979, pp. 313--317.

\bibitem{beimel2011secret}
A.~Beimel, ``Secret-sharing schemes: {A} survey,'' in \emph{International
  Conference on Coding and Cryptology}.\hskip 1em plus 0.5em minus 0.4em\relax
  Springer, 2011, pp. 11--46.

\bibitem{karnin1983secret}
E.~Karnin, J.~Greene, and M.~Hellman, ``On secret sharing systems,'' \emph{IEEE
  Transactions on Information Theory}, vol.~29, no.~1, pp. 35--41, 1983.

\bibitem{yoshida2018optimal}
M.~Yoshida, T.~Fujiwara, and M.~P. Fossorier, ``Optimal uniform secret
  sharing,'' \emph{IEEE Transactions on Information Theory}, vol.~65, no.~1,
  pp. 436--443, 2018.

\bibitem{farras2014optimal}
O.~Farr{\`a}s, T.~Hansen, T.~Kaced, and C.~Padr{\'o}, ``Optimal non-perfect
  uniform secret sharing schemes,'' in \emph{Annual Cryptology Conference
  CRYPTO}.\hskip 1em plus 0.5em minus 0.4em\relax Springer, 2014, pp. 217--234.

\bibitem{yamamoto1986secret}
H.~Yamamoto, ``Secret sharing system using $(k, l, n)$ threshold scheme,''
  \emph{Electronics and Communications in Japan (Part I: Communications)},
  vol.~69, no.~9, pp. 46--54, 1986.

\bibitem{blakley1984security}
G.~R. Blakley and C.~Meadows, ``Security of ramp schemes,'' in \emph{Workshop
  on the Theory and Application of Cryptographic Techniques}.\hskip 1em plus
  0.5em minus 0.4em\relax Springer, 1984, pp. 242--268.

\bibitem{yoneyama2004non}
K.~Yoneyama, N.~Kunihiro, B.~Santoso, and K.~Ohta, ``Non-linear function ramp
  scheme,'' in \emph{Proc. Int. Symp. Inf. Theory Appl.(ISITA)}, 2004, pp.
  788--793.

\bibitem{yoshida2007secure}
M.~Yoshida and T.~Fujiwara, ``Secure construction for nonlinear function
  threshold ramp secret sharing,'' in \emph{{Proc. of IEEE Int. Symp. Inf.
  Theory}}, 2007, pp. 1041--1045.

\bibitem{kurosawa1994nonperfect}
K.~Kurosawa, K.~Okada, K.~Sakano, W.~Ogata, and S.~Tsujii, ``Nonperfect secret
  sharing schemes and matroids,'' in \emph{Advances in
  Cryptology—EUROCRYPT’93: Workshop on the Theory and Application of
  Cryptographic Techniques Lofthus, Norway, May 23--27, 1993 Proceedings
  12}.\hskip 1em plus 0.5em minus 0.4em\relax Springer, 1994, pp. 126--141.

\bibitem{capocelli1993size}
R.~M. Capocelli, A.~De~Santis, L.~Gargano, and U.~Vaccaro, ``On the size of
  shares for secret sharing schemes,'' \emph{Journal of Cryptology}, vol.~6,
  no.~3, pp. 157--167, 1993.

\bibitem{blundo2001information}
C.~Blundo, A.~De~Santis, L.~Gargano, and U.~Vaccaro, ``On the information rate
  of secret sharing schemes,'' in \emph{Advances in
  Cryptology—CRYPTO’92}.\hskip 1em plus 0.5em minus 0.4em\relax Springer,
  1992, pp. 148--167.

\bibitem{van1995information}
M.~Van~Dijk, ``On the information rate of perfect secret sharing schemes,''
  \emph{Designs, codes and cryptography}, vol.~6, no.~2, pp. 143--169, 1995.

\bibitem{csirmaz1995size}
L.~Csirmaz, ``The size of a share must be large,'' in \emph{Advances in
  Cryptology—EUROCRYPT'94}.\hskip 1em plus 0.5em minus 0.4em\relax Springer,
  1995, pp. 13--22.

\bibitem{blundo1997tight}
C.~Blundo, A.~D. Santis, R.~D. Simone, and U.~Vaccaro, ``Tight bounds on the
  information rate of secret sharing schemes,'' \emph{Designs, Codes and
  Cryptography}, vol.~11, pp. 107--110, 1997.

\bibitem{farras2020improving}
O.~Farr{\`a}s, T.~Kaced, S.~Mart{\'\i}n, and C.~Padr{\'o}, ``Improving the
  linear programming technique in the search for lower bounds in secret
  sharing,'' \emph{IEEE Transactions on Information Theory}, vol.~66, no.~11,
  pp. 7088--7100, 2020.

\bibitem{blundo1993efficient}
C.~Blundo, A.~D. Santis, and U.~Vaccaro, ``Efficient sharing of many secrets,''
  in \emph{Annual Symposium on Theoretical Aspects of Computer Science}.\hskip
  1em plus 0.5em minus 0.4em\relax Springer, 1993, pp. 692--703.

\bibitem{yoshida2012optimum}
M.~Yoshida, T.~Fujiwara, and M.~Fossorier, ``Optimum general threshold secret
  sharing,'' in \emph{International Conference on Information Theoretic
  Security, ICITS}.\hskip 1em plus 0.5em minus 0.4em\relax Springer, 2012, pp.
  187--204.

\bibitem{farras2016recent}
O.~Farr{\`a}s, T.~B. Hansen, T.~Kaced, and C.~Padr{\'o}, ``On the information
  ratio of non-perfect secret sharing schemes,'' in \emph{Algorithmica},
  vol.~79, no.~4.\hskip 1em plus 0.5em minus 0.4em\relax Springer, 2017, pp.
  987--1013.

\bibitem{blundo1996randomness}
C.~Blundo, A.~De~Santis, and U.~Vaccaro, ``Randomness in distribution
  protocols,'' \emph{Information and Computation}, vol. 131, no.~2, pp.
  111--139, 1996.

\bibitem{de1999multiple}
A.~De~Santis and B.~Masucci, ``Multiple ramp schemes,'' \emph{IEEE Transactions
  on Information Theory}, vol.~45, no.~5, pp. 1720--1728, 1999.

\end{thebibliography}

\end{document}